\newif\ifarxiv
\newif\ifjournal
\definecolor{labelkey}{rgb}{0,0,.75}
\definecolor{MyGreen}{rgb}{0,.6,.2}
\definecolor{MyDarkBlue}{rgb}{.1,.1,.75}
\let\PsiFoo\Theta
\newcommand*\Bell{{\ensuremath{\boldsymbol\ell}}}
\DeclareMathOperator{\sgn}{\mathrm{sgn}}
\DeclareMathOperator{\osc}{\mathrm{osc}}
\DeclareMathOperator{\Ric}{\mathrm{Ric}}
\DeclareMathOperator{\ck}{\bf L}
\renewcommand{\div}{\mathop{\rm div}\nolimits}
\DeclareMathOperator{\Lap}{\Delta}
\DeclareMathOperator{\tr}{\rm tr}
\DeclareMathOperator{\diag}{\rm diag}
\newcommand{\<}{\mkern 1mu}
\def\ip<#1,#2>{\left<#1,#2\right>}
\let\ol\overline
\newcommand{\ra}{\rightarrow}
\newcommand{\Reals}{\mathbb{R}}
\newcommand{\Ints}{\mathbb{Z}}
\newcommand{\calB}{\mathcal{B}}
\newcommand{\calC}{\mathcal{C}}
\newcommand{\calK}{\mathcal{K}}
\newcommand{\calT}{\mathcal{T}}
\def\define#1{{\bf #1}}
\begin{document}

\newtheorem{theorem}{Theorem}[section]
\newtheorem*{proposition-cmc-summary}{Summary of Proposition \ref{prop:cmcflat}}
\newtheorem*{proposition-noncmc-summary}{Summary of Propositions \ref{prop:noncmc} and \ref{prop:class}}
\newtheorem{proposition}[theorem]{Proposition}
\newtheorem{corollary}[theorem]{Corollary}
\newtheorem{lemma}[theorem]{Lemma}
\theoremstyle{definition}
\newtheorem{definition}[theorem]{Definition}
\numberwithin{equation}{section}

\date{April 29, 2014}

\ifjournal
\title[Conformal Parameterizations of Flat Kasner Spacetimes]
{Conformal Parameterizations of Slices of\\ Flat Kasner Spacetimes}
\else
\title{Conformal Parameterizations of Slices of Flat Kasner Spacetimes}
\fi

\author{David Maxwell}
\ifjournal
\address{%
Department of Mathematics and Statistics\\
University of Alaska Fairbanks\\
P.O. Box 756660\\
Fairbanks, AK 99775-6660}
\email{damaxwell@alaska.edu}
\fi

\ifjournal\else\maketitle\fi

\begin{abstract}
The Kasner metrics are among the simplest solutions of the vacuum Einstein equations,
and we use them here to examine the conformal method of finding
solutions of the Einstein constraint equations. After describing the 
conformal method's construction
of constant mean curvature (CMC) slices of Kasner spacetimes, we turn our attention
to non-CMC slices of the smaller family of flat Kasner spacetimes.
In this restricted setting we obtain a full description of the construction of 
certain $U^{n-1}$ symmetric slices, even in the far-from-CMC regime. Among the
conformal data sets generating these slices we find that most data sets construct
a single flat Kasner spacetime, but that there are also far-from-CMC data sets 
that construct one-parameter families of slices.
Although these non-CMC families are analogues of well-known CMC one-parameter
families, they differ in important ways.  Most significantly,
unlike the CMC case, the condition signaling the appearance of these non-CMC families
is not naturally detected from the conformal data set itself.  In light of this difficulty, 
we propose modifications of the conformal method that involve a conformally transforming
mean curvature.
\end{abstract}

\ifjournal
\subjclass{Primary 	83C05; Secondary 53Z05 } 
\keywords{conformal method, Einstein constraint equations, Kasner}
\fi

\ifjournal\maketitle\fi

\section{Introduction}

Initial data for the vacuum Cauchy problem in general relativity
consist of a Riemannian manifold $(M^n,g)$ and
symmetric (0,2)-tensor $K$ on $M$. The goal of the Cauchy problem is to 
find a Lorenzian manifold $(\Lambda^{n+1},\lambda)$  satisfying
the vacuum Einstein equation $\Ric_{\lambda} = 0$
and an embedding of $M$ into $\Lambda$ such that $g$ and $K$
are the induced metric and second-fundamental form.  Since $\lambda$
is Ricci flat, the Gauss and Codazzi equations impose the following necessary
conditions on $g$ and $K$ for the existence of a solution
of the Cauchy problem:
\begin{subequations}\label{eq:constraints}
\begin{alignat}{2}
R_{g} - |K|_{g}^2 + (\tr_g K)^2 &= 0 &\qquad&\text{\small[Hamiltonian constraint]}\label{eq:hamiltonian}\\
\div_{g} K &= d \tau&\qquad&\text{\small[momentum constraint]}\label{eq:momentum}
\end{alignat}
\end{subequations}
where $R_g$ is the scalar curvature of $g$ and $\tau = \tr_g K$ is the mean curvature.
These are known as the Einstein constraint equations, and
Choquet-Bruhat showed \cite{FouresBruhat:1952wg} that the Cauchy problem is solvable
if and only if $g$ and $K$ solve equations \eqref{eq:constraints}.

The oldest and most general method of solving the constraints is the conformal method, which was
initiated in 1944 by Lichnerowicz \cite{Lichnerowicz:1944} to construct solutions of the constraint
equations satisfying $\tau\equiv 0$.
The method was extended by York and collaborators: first in the 1970s to handle non-trivial 
mean curvatures  \cite{York:1973fl} \cite{OMurchadha:1974bf} and more recently 
as the conformal thin sandwich method in its Lagrangian \cite{YorkJr:1999jo} and Hamiltonian
\cite{Pfeiffer:2003ka} forms.  Although 
the original conformal method of the 1970s and the conformal thin sandwich method appear
to be different, it has been recently demonstrated that they are the same \cite{Maxwell:2014a}.

In this paper we work with the Hamiltonian conformal 
thin-sandwich method in the form presented in \cite{Maxwell:2014a} 
as the CTS-H method; we briefly summarize
it here and refer the reader to \cite{Maxwell:2014a} for a better-motivated treatment.
Given a Riemannian metric $g$, a transverse traceless tensor $\sigma$ (i.e.,
a symmetric (0,2)-tensor satisfying $\div_g \sigma=0$ and $\tr_g \sigma = 0$),
a mean curvature $\tau$, and a positive function $N$ representing a densitized lapse,
and we seek a solution $(\ol g, \ol K)$ of the constraint equations 
satisfying
\begin{subequations}\label{eq:CTSHparams-coords}
\begin{alignat}{2}
\ol g &= \phi^{q-2} g\\
\ol K &= \phi^{-2}\left( \sigma + \frac{1}{2N}(\ck_{g}W)\right) + \frac{\tau}{n}\ol g
\end{alignat}
\end{subequations}
where $\ck_g$ is the conformal Killing operator,
\begin{equation}
q = \frac{2n}{n-2},
\end{equation}
and where $\phi$ and $W$ are an unknown 
conformal factor and vector field respectively.
Setting
\begin{equation}
\kappa = \frac{n-1}{n}
\end{equation}
the constraint equations then
become the CTS-H equations
\ifjournal
\begin{subequations}\label{eq:ctsh}
\begin{alignat}{2}
-2\kappa q \Lap_g \phi + R_g \phi -\left|\sigma + \frac{1}{2N}\ck_g W\right|^2_g\phi^{-q-1} + \kappa \tau^2\phi^{q-1} &= 0\\
\div_g \left(\frac{1}{2N} \ck_g W\right) &= \kappa\phi^q d\tau,
\end{alignat}
\end{subequations}
\else
\begin{subequations}\label{eq:ctsh}
\begin{alignat}{2}
-2\kappa q \Lap_g \phi + R_g \phi -\left|\sigma + \frac{1}{2N}\ck_g W\right|^2_g\phi^{-q-1} + \kappa \tau^2\phi^{q-1} &= 0
&\qquad \text{\small[Lichnerowicz-York equation]}
\\
\div_g \left(\frac{1}{2N} \ck_g W\right) &= \kappa\phi^q d\tau,
&\qquad \text{\small[CTS-H momentum constraint]}
\end{alignat}
\end{subequations}
\fi
to be solved for $\phi$  and $W$. 
If $\tau\equiv \tau_0$ for some constant $\tau_0$ (the constant mean curvature (CMC) case), 
then the choice of densitized lapse becomes irrelevant and
the solution of system \eqref{eq:ctsh} is $(\phi,W)$ where $W$ is any conformal Killing field 
(including $W\equiv 0$) and
where $\phi$ solves the CMC Lichnerowicz-York equation
\begin{equation}\label{eq:CMC-LY}
-2\kappa q \Lap_g \phi + R_g \phi -\left|\sigma\right|^2_g\phi^{-q-1} + \kappa \tau_0^2\phi^{q-1}.
\end{equation}

For simplicity we restrict our attention to the case where $M$ is compact and without boundary.
Given a conformal data set $(g,\sigma,\tau,N)$, one wants to know how many solutions $(\phi,W)$
exist for the CTS-H equations \eqref{eq:ctsh}.  Ideally there will be exactly one
(up to adding a conformal Killing field to $W$), and 
hence exactly one associated solution of the constraint equations.  
If $\tau$ is constant, this is essentially true \cite{Isenberg:1995bi}.
A conformal data set $(g, \sigma, \tau\equiv\tau_0, N)$ leads to to a unique
solution of the constraint equations except under the following circumstances:
\begin{enumerate}
\item[$\mathcal Y_+$:] $(M,g)$ is Yamabe positive and $\sigma\equiv 0$, in which case there is no solution,
\item[$\mathcal Y_-$:] $(M,g)$ is Yamabe negative and $\tau_0=0$, in which case there is no solution,
\item[$\mathcal Y_0$:] $(M,g)$ is Yamabe null and $\sigma\equiv 0$ or $\tau_0=0$ , in which case there is no solution
unless both vanish (and we then pick up a one-parameter homothety family of solutions).
\end{enumerate}
When $\tau$ is near-constant
it is also usually true that there is exactly one solution
\cite{Isenberg:1996fi}\cite{Allen:2008ef}\cite{Maxwell:2014a}, but very little is known in
the far-from-CMC setting.  
The first significant far-from-CMC result was proved in \cite{Holst:2009ce} and
extended in \cite{Maxwell:2009co}.  These papers show that given a generic Yamabe-positive
metric $g$ and an arbitrary mean curvature $\tau$ and densitized lapse $N$, if $\sigma$
is sufficiently small, then there exists at least one solution of the CTS-H 
equations, but the number of solutions is unknown.\footnote{The results of \cite{Holst:2009ce} and 
\cite{Maxwell:2009co} were proved
using the original conformal method, but using \cite{Maxwell:2014a} they imply equivalent 
results for the CTS-H method.} After forty years of study, these are the only general results known
for the conformal method, on a compact manifold, in the far-from-CMC regime. 

Given this lack of progress, it seems useful to turn to special cases to
illuminate properties that might be expected for general mean curvatures.
To this end, we consider the family of Kasner spacetimes, which
form a particularly simple class of solutions of the Einstein equations.
Within the Kasner family there is a small subfamily of flat spacetimes,
and these are the focus of our investigation of far-from-CMC solutions.

Before examining the far-from-CMC case, it is helpful to start with
CMC slices of arbitrary Kasner spacetimes. Section \ref{sec:CMC} 
describes a tidy parameterization of these slices where, in effect, the conformal parameters 
first select a member of a Kasner family and then a CMC slice within that 
spacetime. The flat Kasner spacetimes of principal interest
arise from CMC conformal data sets of the form
\begin{equation}\label{eq:cmckasnerset}
(g,\mu \sigma^\flat, \tau_0)
\end{equation}
where $g$ is a flat product metric on the torus, $\sigma^\flat$ is a particular
transverse-traceless (TT) tensor on the torus, and $\mu$ and $\tau_0$
are constants. There are restrictions on $\mu$ and $\tau_0$, however.
\begin{proposition-cmc-summary}\strut
A conformal data set \eqref{eq:cmckasnerset} generates solutions of the 
Einstein constraint equations as follows.
\begin{enumerate}
\item If $\mu$ and $\tau_0$ have the same non-zero sign, 
we obtain a CMC slice of a flat Kasner spacetime.
\item If $\mu$ and $\tau_0$ have opposite non-zero signs, we obtain a CMC slice of a member of a different category of dual-to-flat Kasner spacetimes.
\item If $\mu$ and $\tau_0$ both vanish 
we obtain 
a homothety family of CMC slices of certain non-Kasner (static-toroidal) flat spacetimes.  
\item If one of 
$\mu$ or $\tau_0$ vanishes but the other does not, then
there is no solution of the constraint equations.
\end{enumerate}
\end{proposition-cmc-summary}

Turning to non-CMC solutions of the constraint equations, 
in Section \ref{sec:non-cmc} we consider conformal data sets of the form
\begin{equation}\label{eq:noncmckasnerset}
(g,\mu\sigma^\flat, \tau, N)
\end{equation}
where $g$, $\mu$, and $\sigma^\flat$ are as in data sets \eqref{eq:cmckasnerset}, and where $N>0$
and $\tau$ are arbitrary functions of one factor of the torus.  
\ifjournal\else

\goodbreak\fi
The following two quantities play an important role 
for conformal data sets \eqref{eq:noncmckasnerset}:
\begin{equation}
\label{eq:tau-star-intro}
\tau^* = \frac{\int_{S^1} N \tau\; dx}{\int_{S^1} N\; dx}
\end{equation}
and
\begin{equation}
\tau^\circ = \frac{\int_{S^1} \tau\; dx}{\int_{S^1} 1\; dx}
\label{eq:tau-circ-intro}
\end{equation}
where $x$ is the standard coordinate on $S^1$.
\begin{proposition-noncmc-summary}
Consider a conformal data set \eqref{eq:noncmckasnerset}.
\begin{enumerate}
\item If $\mu$ and $\tau^*$ are non-zero and have the same sign,
then the conformal data set generates at least one solution
of the constraint equations.  If $\tau^\circ\neq 0$ 
it is a flat Kasner solution, otherwise it is a 
static-toroidal solution.
\item If $\mu$ and $\tau^*$ both
vanish, then the conformal data set generates a one-parameter
family of solutions.
If $\tau^\circ\neq 0$ 
the family consists of flat Kasner solutions, otherwise
the family consists of static-toroidal solutions.
\end{enumerate}
\end{proposition-noncmc-summary}

Hence Propositions \ref{prop:noncmc} and \ref{prop:class} 
establish non-CMC analogues of cases 1) and 3)
of Proposition \ref{prop:cmc}; cases 2) and 4) remain
open.

The conditions $\tau^*=0$ and $\tau^\circ=0$ are separate exceptional
cases, the first associated with one-parameter families of solutions,
and the second signaling static-toroidal solutions rather than flat Kasner
solutions. Now if $\tau\equiv\tau_0$ for some constant $\tau_0$, then
$\tau^*=\tau^\circ=\tau_0$ and the two exceptional cases
merge into a single exceptional condition: $\tau_0=0$.  So in the
CMC setting the one-parameter families always
coincide with the static-toroidal solutions.
In contrast, for non-CMC data, the one-parameter families are typically
flat Kasner solutions (with static-toroidal families appearing
only in the rare case when $\tau^\circ=0$ as well).

The most important difference between the CMC and non-CMC results, however, is that it
is comparatively difficult to tell from a general non-CMC conformal data set if 
it is associated with a one-parameter family of flat Kasner spacetimes.
In the CMC setting we specify $\tau_0$ directly, and $\tau_0=0$ is 
the exceptional case.  For non-CMC solutions, we must correspondingly determine if
$\tau^*=0$. Although $\tau^*$ appears to be directly 
specified in equation \eqref{eq:tau-star-intro} from $\tau$ and $N$,
this is something of an illusion.
Recall that the CTS-H method is conformally covariant
in the sense that if $(g,\sigma,\tau,N)$ and $(\tilde g,\tilde \sigma,\tilde \tau,\tilde N)$
are related by
\begin{equation}
\begin{aligned}
\tilde g &= \psi^{q-2} g\\
\tilde \sigma &= \psi^{-2} \sigma\\
\tilde \tau &= \tau\\
\tilde N &= \psi^q N
\end{aligned}
\end{equation}
for some conformal factor $\psi$,
then the two sets of conformal data generate the same solutions of the constraint equations.
So if $(g,\sigma,\tau,N)$ generates a one-parameter family, so does 
$(\tilde g,\tilde \sigma,\tilde \tau,\tilde N)$.  But equation \eqref{eq:tau-star-intro} defining $\tau^*$
is not conformally covariant: the value of $\tau^*$ is readily computed with
respect to only a few choices of background metric.  As discussed
in Section \ref{sec:mc-alt}, it appears that one cannot determine if 
a data set $(\tilde g, \tilde \sigma, \tilde \tau, \tilde N)$
will generate a one-parameter family of
solutions without first effectively solving the CTS-H equations in the first place.
We emphasize that the CTS-H equations are equivalent to the classic non-CMC conformal
method of \cite{York:1973fl}, so the difficulty of detecting $\tau^*=0$ cannot be remedied
by changing to that formulation: there exist non-CMC classic conformal data sets $(g,\sigma,\tau)$
that lead to one-parameter families of solutions, and the condition determining
whether or not a one-parameter family occurs involves first finding, up to some scale, a solution metric.
On the other hand, since $\tau^*$ 
is easy to compute with respect to the solution metric, this suggests alternative,
non-trivial conformal transformation laws for the mean curvature that we describe in Section 
\ref{sec:mc-alt}.

It is remarkable that the one-parameter families presented here have remained
undetected for as long as they have.  Once one knows to look for them,
they are easy to find; the harder work is showing that the families are slices
of flat Kasner spacetimes.  We note, however, that a special case of the families 
was found previously in \cite{Maxwell:2011if}, which was another case study of 
special conformal data sets on the torus, and which discovered a number 
of non-existence/non-uniqueness phenomena in the far-from-CMC regime, including
occasional one-parameter families.  
The mean curvatures in that study are functions of one factor of the torus 
of the following form:
\begin{equation}\label{eq:jump}
\tau_a(x) = a + \xi(x)
\end{equation}
where $a$ is a constant and where $\xi$ equals $1$ on half of the circle and equals $-1$
on the other half.  Typically one assumes that a mean curvature belongs to
$W^{1,p}$ with $p>n$, so the $L^\infty$ regularity of the mean curvatures in 
\cite{Maxwell:2011if} made it conceivable that the phenomena discovered there
did not extend to more traditional conformal data sets.  We find here that not
only are there smooth non-CMC conformal data sets that lead to one-parameter families
of solutions, but, as discussed in Section \ref{sec:jump},
at least some of the solutions of the constraint equations
found in \cite{Maxwell:2011if} for non-smooth conformal data
generate smooth ambient spacetimes.  

\section{Preliminaries}\label{sec:prelims}

Let $\Reals^{1,1}$ be $\Reals^2$ equipped with the Lorenzian metric $\lambda=\diag(-1,1)$.  
We say a vector $V=(t,x)$ is future pointing if $t>0$ and past pointing if $t<0$.
Similarly, $V$ is rightward or leftward pointing if $x>0$ or $x<0$ respectively.
The open triangles of future and past-pointing timelike vectors are denoted by $I_+$ and $I_-$.

Given $\psi\in\Reals$ we define the future-pointing unit timelike vector
\begin{equation}
T(\psi) = (\cosh(\psi),\sinh(\psi))
\end{equation}
and the rightward-pointing unit spacelike vector
\begin{equation}
X(\psi) = (\sinh(\psi), \cosh(\psi) ).
\end{equation}

Given a hyperbolic angle $\Psi\in\Reals$ we define $B_\Psi$ to be the boost on $\Reals^{1,1}$
that fixes the origin and takes $T(0)$ to $T(\Psi)$. This isometry is a linear map, and its matrix with
respect to the standard basis $\{T(0),X(0)\}$ is 
\begin{equation}
\begin{pmatrix} \cosh(\Psi) & \sinh(\Psi) \\
\sinh(\Psi) & \cosh(\Psi) \end{pmatrix}.
\end{equation}
Elementary arguments show that for any $\psi\in\Reals$, $B_\Psi(T(\psi)) = T(\psi+\Psi)$
and $B_\Psi(X(\psi))=X(\psi+\Psi)$.

We take $S^1$ to be the circle of radius 1 in $\Reals^2$, and its \define{standard metric}
is the metric inherited from this embedding.
The covering map $s\mapsto (\cos(2\pi s),\sin(2\pi s))$ provides what we will call
\textbf{unit coordinates} on $S^1$, and 
the vector field $\partial_s$ determines the positive orientation on $S^1$.
Given a length $\ell>0$, we define the metric 
\begin{equation}
g_\ell = \ell^2 ds^2
\end{equation}
on $S^1$, so the standard metric is $g_{2\pi}$.
We write $S^1_\ell$ for $S^1$ equipped with $g_\ell$.

\define{Unit coordinates} on tori $T^n$ are products of unit coordinates on $S^1$.
Given a vector $\Bell=(\ell_1,\ldots,\ell_n)$ of lengths we define the
product metric on $T^n$
\begin{equation}
g_\Bell = \sum_{k=1}^n \ell_k^2 (ds^k)^2
\end{equation}
in unit coordinates, and $T^n$ equipped with $g_\Bell$ is $T^n_\Bell$.
The volume element on $T^n_\Bell$ is $dV_\Bell$.

We will mostly assume that the objects under discussion are smooth.
At times, however, we will discuss curves in Lorenzian surfaces
with $L^\infty$ curvatures, and we clarify here what this means.
Let $\gamma$ be a curve from an interval $I$ (with coordinate $s$) 
to a smooth, time-oriented Lorenzian surface $\Lambda^{1,1}$.
We say the curve is $W^{2,p}$ for some $p\ge 1$ if in any smooth coordinate chart the components
of $\gamma$ belong to $W^{2,p}(I)$. Similar considerations hold for curves
defined on $S^1$ rather than $I$; we replace $s$ with the unit coordinate on $S^1$.

Now assume that $\gamma\in W^{2,p}$ with $p>1$, so
$\gamma$ is $C^{1,\alpha}$ with $(1/\alpha) = 1 - (1/p)$.
In particular the tangent vector to such a curve is well-defined.
Suppose $\gamma'$ is spacelike everywhere.  
We can then pull back the metric $\lambda$ on $\Lambda^{1,1}$ to obtain
a $W^{1,p}$ Riemannian metric $g$ on $I$ or $S^1$ via
\begin{equation}
g(\partial_s, \partial_s) = \ip<\gamma'(s),\gamma'(s)>_\lambda.
\end{equation}
For each $s\in I$ (or $S^1$) we let $n(s)$ be the future-pointing unit timelike vector
at $\gamma(s)$ 
that is orthogonal to $\gamma'(s)$. This is a $W^{1,p}$ vector field along
the $W^{2,p}$ curve $\gamma$ and we define the induced second-fundamental form $K$ by
\begin{equation}
K(\partial_s,\partial_s) = -\ip< n(s), \nabla_{\gamma'} \gamma'>,
\end{equation}
where $\nabla_{\gamma'} \gamma'$ is the $L^p$ valued vector field defined along $\gamma$
via the usual formula in any local coordinates.
The curvature of $\gamma$ is the function $\tau\in L^p$ such that
\begin{equation}
K(\partial_s,\partial_s) = \tau(s) g(\partial_s,\partial_s).
\end{equation}
If $\Lambda^{1,1}=\Reals^{1,1}$ and $\gamma(s)=(t(s),x(s))$,
\begin{equation}\label{eq:coordcurvature}
\tau(s) =  
\frac{t''(s)x'-x''(s)t'(s)}{{(x'(s)^2-t'(s)^2)}^{3/2}}.
\end{equation}

\section{Kasner Spacetimes}\label{sec:Kasner}

The \define{expanding Kasner metrics} 
on $\Reals_+ \times \Reals^n$, in coordinates $(t,x^1,\ldots,x^n)$, have the form
\begin{equation}\label{eq:Kasner}
\lambda = -dt^2 + \sum_{k=1}^n |t|^{2 a_k} (dx^k)^2
\end{equation}
where the exponents $a_k$ satisfy the Kasner conditions
\begin{equation}\label{eq:KasnerConds}
\begin{aligned}
\sum_{i=k}^n a_k &= 1\\
\sum_{i=k}^n (a_k)^2 &= 1
\end{aligned}
\end{equation}
which ensure the metric is Ricci flat (and therefore a solution of the vacuum 
Einstein equations) \cite{Kasner:1921iy}.  
We also have the \define{contracting Kasner metrics} given by the metric \eqref{eq:Kasner}
on $\Reals_- \times \Reals^n$ rather than $\Reals_+\times \Reals^n$, which are isometric
to the expanding Kasner metrics, and differ only in their time orientations: we take $\partial_t$
to be positively oriented in both cases.

We will be concerned with spacetimes with compact Cauchy surfaces,
and obtain these from the Kasner metrics by taking a quotient by a group of isometries generated
by translations of length $\ell_k$ in the $x^k$ direction, one translation for each $k$.  The resulting
metrics on $\Reals_\pm \times T^n$ then have the form
\begin{equation}\label{eq:KasnerCpct}
-dt^2 + \sum_{k=1}^n |t|^{2 a_k} (\ell_k)^2 (ds^k)^2
\end{equation}
in unit coordinates on the torus.  Henceforth an \define{expanding} or \define{contracting Kasner spacetime} 
will refer to  $\Reals_+ \times T^n$ or  $\Reals_- \times T^n$ respectively
equipped with a  Kasner metric \eqref{eq:KasnerCpct}.

Each slice of constant $t$ in a Kasner spacetime is a flat Riemannian torus with metric
\begin{equation}\label{eq:kasnerg}
g = \sum_{k=1}^n  |t|^{2a_k}(\ell_k)^2(ds^k)^2
\end{equation}
and with second fundamental form
\begin{equation}\label{eq:kasnerK}
K=\sum_{k=1}^n  \frac{a_k}{t} |t|^{2a_k}(\ell_k)^2(ds^k)^2.
\end{equation}
Since $\sum_k a_k=1$, the slice of constant $t$ has constant mean curvature 
\begin{equation}\label{eq:KasnerCMC}
\tau = 1/t.
\end{equation}

Although the Kasner conditions rule out the possibility that all exponents are
equal, there are two exceptional cases where all exponents but one are equal.
If $a_1=1$ and the remaining exponents vanish we
obtain the \textbf{flat Kasner} spacetimes.  These metrics have
vanishing curvature, not just vanishing Ricci curvature, 
and the metric decomposes as a product
on $(\Reals_\pm\times S^1)\times T^{n-1}$.  The metric on $\Reals_\pm\times S^1$ is 
\begin{equation}\label{eq:FlatKasner}
-dt^2 + t^{2} (\ell_1)^2 (ds^1)^2 
\end{equation}
and $T^{n-1}$ has a flat product metric $g_{\hat\Bell}$ for some $\hat\Bell=(\ell_2,
\ldots, \ell_n)$.
If instead $a_1 = 2/q = (2/n)-1$ and each remaining $a_k = 2/n$, then
the Kasner metric has the form
\begin{equation}
-dt^2 + |t|^{(4/n)-2} (\ell_1)^2 (ds^1)^2 + |t|^{4/n} g_{\hat \Bell}.
\end{equation}
\nobreak
We will call these \textbf{dual-to-flat Kasner} spacetimes.  
\goodbreak

The metric \eqref{eq:FlatKasner} associated with flat
Kasner spacetimes can be obtained from the standard metric on $\Reals^{1,1}$
by taking a quotient by a group of boosts, and we sketch
this construction next.

\begin{definition}\label{def:KasnerSurface}
Let $\Psi>0$, and let $\calB_\Psi$ be the group of isometries of $\Reals^{1,1}$ generated
by the boost $B_\Psi$.  Then $\calB_\Psi$ acts smoothly and properly discontinuously on $I_+$,
and we define the \define{Kasner surface with aperture $\Psi$} by
$\calK_\Psi = I_+/\calB_\Psi$. It is a smooth manifold, and since
the quotient is by a group of time- and space-orientation preserving isometries,
$\calK_\Psi$ inherits a time- and space-oriented Lorenzian metric such that the 
projection from $I_+$ is a local (time- and space-orientation preserving) isometry.

The \define{Kasner surfaces with negative aperture} $\Psi<0$ are defined similarly, except
that $\calB_{\Psi}$ acts on $I_-$ rather than $I_+$.
\end{definition}

\begin{figure}
\ifjournal
\includegraphics{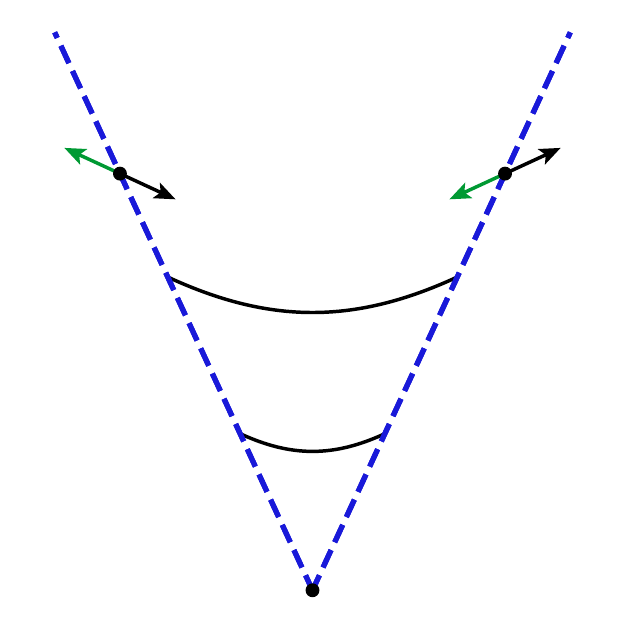}
\else
\hfil\includegraphics{boost}
\fi
\caption{\label{fig:boost}A Kasner surface with positive aperture as a quotient of Minkowski space. 
Blue dashed lines on the left and
right are identified.  The two green arrows represent the same tangent vector, as do the
two black arrows. The
solid black curves are smooth images of $S^1$ with constant curvature,
and are slices of constant $t$ for the metric \eqref{eq:Kasner2}.}
\end{figure}

Supposing $\Psi>0$, we define a covering map from $R_+\times \Reals$ to $\mathcal K_\Psi$
by
\begin{equation}
(t,x) \mapsto \Pi(t T(x)),
\end{equation}
where $\Pi$ is the projection from $I_+$ to $\calK_{\Psi}$.  The metric on 
$\calK_{\Psi}$ then pulls back to 
\begin{equation}\label{eq:Kasner2}
-dt^2 + t^2 dx^2,
\end{equation}
and taking the quotient of this spacetime by the group of isometries generated by the translation  $x\mapsto x+\Psi$ 
then determines a spacetime isometric to $\calK_\Psi$, with a metric on $\Reals_+\times S^1$
\begin{equation}\label{eq:FlatKasner2}
-dt^2 + \Psi^2 t^2 ds^2.
\end{equation}
An analogous construction is possible when $\Psi<0$, in which case
we pick up the metric \eqref{eq:FlatKasner2} on $\Reals_-\times S^1$ instead of 
$\Reals_+\times S^1$.
Comparing the metrics \eqref{eq:FlatKasner2} and \eqref{eq:FlatKasner} we see that
a flat Kasner spacetime is isometric to a product
\begin{equation}
\mathcal K_\Psi \times T^{n-1}_{\hat\Bell}
\end{equation}
for some $\Psi\neq 0$ and positive lengths $\hat \Bell=(\ell_2,\ldots,\ell_n)$.

The case $\Psi=0$ was excluded in Definition \ref{def:KasnerSurface}; in its 
place we have the following.
\begin{definition}\label{def:cylindrical}
Let $\ell>0$ and let $\calT_\ell$ be the group of isometries of $\Reals^{1,1}$ generated
by translation by $(0,\ell)$.  Then 
$\calC_\ell = \Reals^{1,1}/\calT_\ell$ is a \define{flat Lorenzian cylinder}.
A \define{static toroidal spacetime} is  $\Reals\times T^n$ with the metric
\begin{equation}
-dt^2 + g_{\Bell}
\end{equation}
for some $\Bell=(\ell_1,\ldots,\ell_n)$.  Equivalently 
it is
$\calC_{\ell_1}\times T^{n-1}_{\hat\Bell}$ 
where $\hat\Bell=(\ell_2,\ldots,\ell_n)$.
\end{definition}

\section{Conformal Constructions of CMC Slices of Kasner Spacetimes}\label{sec:CMC}

We wish to describe the conformal data sets that lead to the 
previously described CMC slices of Kasner spacetimes.
Since we know in advance that the solution metric is a flat product metric, 
conformal covariance allows us to focus on conformal data sets of the form
$(g_\Bell, \sigma, \tau_0, N)$ for some vector $\Bell$ of lengths.
Since the slices are CMC, we know that $\tau\equiv \tau_0$ for some
constant $\tau_0$, and that the densitized lapse $N$ is irrelevant. So
our task is to identify the associated transverse-traceless tensors $\sigma$.

Let $\mathbf a=(a_1,\ldots,a_n)$ be a vector of Kasner exponents.  For reasons that
will be clear shortly, the vector
\begin{equation}\label{eq:b}
\mathbf b = \mathbf a - (1/n,\ldots,1/n)
\end{equation}
plays a prominent role in conformal constructions of CMC slices of Kasner spacetimes,
and we will call it a vector of \textbf{offset Kasner exponents}.  A routine computation
shows that if $\mathbf a$ and $\mathbf b$ are related by equation \eqref{eq:b}, then
$\mathbf a$ satisfies the Kasner conditions \eqref{eq:KasnerConds} if and only if $\mathbf b$
satisfies the \textbf{offset Kasner conditions}
\begin{subequations}\label{eq:offKasner}
\begin{alignat}{2}
\sum_{k=1}^n b_k &= 0 \label{eq:offKasner1} \\
\sum_{k=1}^n (b_k)^2 &= 1 - \frac{1}{n} = \kappa.\label{eq:offKasner2}
\end{alignat}
\end{subequations}

Given vectors $\Bell$ and $\mathbf b$ of lengths and offset Kasner exponents
we define
\begin{equation}
\sigma_{\mathbf b, \Bell} = \sum_{k=1}^n b_k\, (\ell_k)^2\, (ds^k)^2
\end{equation}
in unit coordinates on the torus.  Since each $b_k$ is constant it follows that
\begin{equation}
\div_{g_{\Bell}}\, \sigma_{\mathbf b, {\Bell}} = 0,
\end{equation}
and from the first offset Kasner condition of \eqref{eq:offKasner1} we find
\begin{equation}
\tr_{g_{\Bell}} \sigma_{\mathbf b, {\Bell}} = 0
\end{equation}
as well. Hence $\sigma_{\mathbf b, \Bell}$ is transverse-traceless with respect 
to $g_{\Bell}$.  A simple computation shows that given a vector $\Bell$
of lengths, any tensor of the form 
$\sum_{k=1}^n a_k (ds^k)^2$ can be written uniquely in the form $\mu \sigma_{\mathbf b, {\Bell}}$
where $\mathbf b$ is a vector of offset Kasner exponents $\mu$ is a constant.

Now consider a conformal data set $(g_{\Bell}, \mu\sigma_{\mathbf b, {\Bell}}, \tau_0)$
for some nonzero constants $\mu$ and $\tau_0$.  
From the second offset Kasner condition
\eqref{eq:offKasner2} we see that
\begin{equation}\label{eq:normsigb}
|\sigma_{\mathbf b, {\Bell}}|_{g_{\Bell}}^2 = \kappa.
\end{equation}
and therefore the Lichnerowicz-York equation \eqref{eq:CMC-LY}
for $(g_{\Bell}, \mu\sigma_{\mathbf b, {\Bell}}, \tau_0)$
reads
\begin{equation}\label{eq:lichcmc}
-2\kappa q\Lap\phi -\mu^2 \kappa \phi^{q-1} + \kappa \tau_0^2 \phi^{q-1} = 0.
\end{equation}
Since $\mu$ and $\tau_0$ are both non-zero,
the unique positive solution \cite{Isenberg:1995bi} of equation \eqref{eq:lichcmc}
is the constant solution $\phi=c$ where
\begin{equation}\label{eq:d}
c = \left|\frac{\mu} {\tau_0}\right|^{\frac 1 q}.
\end{equation}
Therefore the pair $(\ol g, \ol K)$  given by
\begin{equation}\label{eq:cmcsol}
\begin{aligned}
\ol g &= c^{q-2} g_{\Bell}\\
\ol K &= c^{-2} \sigma_{\mathbf b, {\Bell}} + \frac{\tau_0}{n}c^{q-2} g_{\Bell}
\end{aligned}
\end{equation}
is a solution of the constraint equations, and we now show that it 
generates a Kasner spacetime.

Let 
\begin{equation}\label{eq:s}
s = c^{-q}\frac{\mu}{\tau_0}
\end{equation}
and observe from equation \eqref{eq:d} that $s=1$ if $\mu$ and $\tau_0$ have the same sign, 
and that $s=-1$ otherwise.  We then define
\begin{equation}\label{eq:a}
a_k = s b_k + \frac{1}{n}.
\end{equation}
Since $s \mathbf b$ satisfies the offset Kasner conditions \eqref{eq:offKasner}, we see that
$(a_1,\ldots, a_k)$ is a vector of Kasner exponents.  Let 
\begin{equation}\label{eq:tau2t}
t_0 = \frac{1}{\tau_0},
\end{equation}
and for each $k$ let
\begin{equation}\label{eq:barell}
\ol \ell_k  = |t_0|^{-a_k} c^{q/2-1} \ell_k.
\end{equation}
From equation \eqref{eq:barell} we obtain
\begin{equation}\label{eq:confkasnerg}
\begin{aligned}
\ol g &= \sum_{k=1}^n c^{q-2} (\ell_k)^2 (ds^k)^2\\
&= \sum_{k=1}^n |t_0|^{2a_k} (\ol \ell_k)^2 (ds^k)^2.
\end{aligned}
\end{equation}
Moreover, equations \eqref{eq:cmcsol}, \eqref{eq:s}, \eqref{eq:tau2t}, \eqref{eq:a} and \eqref{eq:barell} imply
\begin{equation}\label{eq:confkasnerK}
\begin{aligned}
\ol K &= \sum_{k=1}^n \left[c^{-2} \mu b_k +\frac{\tau_0}{n}c^{q-2}\right] (\ell_k)^2 (ds^k)^2 \\
&= \sum_{k=1}^n \tau_0 \left[ \frac{\mu c^{-q}}{\tau_0} b_k +\frac{1}{n}\right] c^{q-2}(\ell_k)^2 (ds^k)^2 \\
&= \sum_{k=1}^n \frac{1}{t} \left[ s b_k +\frac{1}{n}\right] |t_0|^{2a_k} (\ol\ell_k)^2 (ds^k)^2\\
&= \sum_{k=1}^n \frac{a_k}{t} |t_0|^{2a_k} (\ol\ell_k)^2 (ds^k)^2.
\end{aligned}
\end{equation}
Comparing equations \eqref{eq:confkasnerg} and \eqref{eq:confkasnerK} with equations 
\eqref{eq:kasnerg} and \eqref{eq:kasnerK} we see that $(\ol g, \ol K)$ is the Cauchy
data for the time $t=t_0=1/\tau_0$ slice of the Kasner spacetime with Kasner exponents
$\mathbf a$ and with metric $g_{\ol \Bell}$ at time $t=\sgn t_0 = t_0/|t_0|$.  We have therefore
demonstrated the following.

\begin{proposition}\label{prop:cmc}
Let $\Bell=(\ell_1,\ldots,\ell_n)$ be a vector of positive numbers,
let $\mathbf b=(b_1,\ldots, b_n)$ be a vector of offset Kasner exponents,
and let $\mu$ and $\tau_0$ be non-zero constants.
The conformal data set $(g_{\Bell}, \mu\> \sigma_{\mathbf b, \Bell}, \tau_0)$
on the torus $T^n$ determines a time 
\begin{equation}\label{eq:t0}
t=\frac{1}{\tau_0}\equiv t_0
\end{equation}
slice of a Kasner spacetime.  It is expanding if $\tau_0>0$ and contracting
if $\tau_0<0$.
The Kasner exponents for this spacetime
are
\begin{equation}\label{eq:a2}
\mathbf a = s \mathbf b + \left(\frac{1}{n},\ldots,\frac{1}{n}\right)
\end{equation}
where $s=1$ if $\mu$ and $\tau_0$ have the same sign and
$s=-1$ otherwise.  The time $t=t_0$ slice of the spacetime
has metric $c^{q-2} g_\Bell $
where
\begin{equation}
c = \left|\frac{\mu}{\tau_0}\right|^{\frac{1}{q}}.
\end{equation}
The time $t=\sgn{t_0}$ slice of this spacetime
has metric $g_{\ol\Bell}$ where
\begin{equation}\label{eq:ell2}
\ol \ell_k = |t_0|^{-a_k} c^{q/2-1} \ell_k.
\end{equation}
\end{proposition}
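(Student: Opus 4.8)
\section*{Proof proposal}

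The plan is to exploit the fact that the target solution metric is flat: conformal covariance of the CTS-H method lets me restrict attention to conformal data sets with background metric $g_{\Bell}$ itself. First I would compute the norm of the TT tensor; since $\sigma_{\mathbf b,\Bell}$ is diagonal with constant entries $b_k(\ell_k)^2$ against the diagonal metric with entries $(\ell_k)^2$, the second offset Kasner condition \eqref{eq:offKasner2} gives $|\sigma_{\mathbf b,\Bell}|^2_{g_{\Bell}}=\sum_k(b_k)^2=\kappa$, hence $|\mu\sigma_{\mathbf b,\Bell}|^2_{g_{\Bell}}=\mu^2\kappa$. Because $g_{\Bell}$ is flat ($R_{g_{\Bell}}=0$) and both this norm and $\tau_0$ are constant, the CMC Lichnerowicz--York equation \eqref{eq:CMC-LY} for this data reads $-2\kappa q\Lap\phi - \mu^2\kappa\phi^{-q-1} + \kappa\tau_0^2\phi^{q-1}=0$ on the closed manifold $T^n$.

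I would then observe that the constant $\phi\equiv c$ with $c=|\mu/\tau_0|^{1/q}$ solves this equation, since $\Lap c=0$ and $c^{2q}=\mu^2/\tau_0^2$ makes the two zeroth-order terms cancel, and invoke the uniqueness of the positive solution in the non-exceptional case ($\sigma\not\equiv 0$ and $\tau_0\neq 0$) quoted from \cite{Isenberg:1995bi}. Since the data is CMC, $W\equiv 0$ (indeed any conformal Killing field) completes the CTS-H solution, and the reconstruction formulas \eqref{eq:CTSHparams-coords} then produce exactly the pair $(\ol g,\ol K)$ of \eqref{eq:cmcsol}.

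The substance of the argument is to recognize $(\ol g,\ol K)$ as Cauchy data of a Kasner slice. I would set $s=c^{-q}\mu/\tau_0$, note from the formula for $c$ that $s=\pm1$ according to whether $\mu$ and $\tau_0$ share a sign, and define $a_k=sb_k+1/n$; since $s\mathbf b$ again satisfies the offset Kasner conditions, $(a_1,\ldots,a_n)$ satisfies the Kasner conditions \eqref{eq:KasnerConds}. With $t_0=1/\tau_0$ and $\ol\ell_k=|t_0|^{-a_k}c^{q/2-1}\ell_k$, direct substitution rewrites $\ol g=\sum_k c^{q-2}(\ell_k)^2(ds^k)^2$ as $\sum_k|t_0|^{2a_k}(\ol\ell_k)^2(ds^k)^2$, matching \eqref{eq:kasnerg}; the same substitutions, together with $c^{-q}\mu/\tau_0=s$, turn the bracketed coefficient $c^{-2}\mu b_k+(\tau_0/n)c^{q-2}$ of $\ol K$ into $(a_k/t_0)|t_0|^{2a_k}(\ol\ell_k)^2$, matching \eqref{eq:kasnerK}. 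Comparing with \eqref{eq:kasnerg}--\eqref{eq:kasnerK} identifies the spacetime as the one with exponents $\mathbf a$ whose $t=\sgn t_0$ slice has metric $g_{\ol\Bell}$, with the sign of $\tau_0$ selecting the expanding versus the contracting copy.

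The only delicate point is the bookkeeping of conformal weights ($q$, $q-2$, $q/2-1$, and $\kappa$) through the reconstruction, together with the sign tracking in $s$; once the constant conformal factor is identified everything else is essentially forced. There is no genuine analytic obstacle, since on a flat closed manifold the PDE degenerates to an algebraic identity and the needed uniqueness statement is quoted.
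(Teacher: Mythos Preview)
Your proposal is correct and follows essentially the same route as the paper: compute $|\sigma_{\mathbf b,\Bell}|^2_{g_{\Bell}}=\kappa$, reduce the Lichnerowicz--York equation to the constant solution $c=|\mu/\tau_0|^{1/q}$ (citing Isenberg for uniqueness), introduce $s=c^{-q}\mu/\tau_0=\pm1$, $a_k=sb_k+1/n$, $t_0=1/\tau_0$, and $\ol\ell_k=|t_0|^{-a_k}c^{q/2-1}\ell_k$, and then verify directly that $(\ol g,\ol K)$ matches the Kasner slice formulas \eqref{eq:kasnerg}--\eqref{eq:kasnerK}. The argument, the definitions, and the order of verification are the same as in the paper.
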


Observe that every constant time slice of a Kasner spacetime arises from the construction
of Proposition \ref{prop:cmc}.  Indeed, consider the time $t=t_0$ slice of
a Kasner spacetime with exponents $\mathbf a$, and let $g_{\hat \Bell}$ be its
time $t=\sgn(t_0)$ metric.
The time $t=t_0$ of the slice determines the mean curvature $\tau_0$ via equation \eqref{eq:t0}
and the Kasner exponents $\mathbf a$ of the spacetime determine offset Kasner exponents
$\mathbf b$ via equation \eqref{eq:a2} with $s=\sgn(\tau_0)$.  Setting $\mu = |\tau_0|$,
equation \eqref{eq:ell2} determines $\Bell$ via
\begin{equation}
\ell_k = t_0^{a_k} \ol \ell_k
\end{equation}
for each $k$, and Proposition \ref{prop:cmc} then shows that the slice is generated by 
$(g_{\Bell}, |\tau_0|\sigma_{\mathbf b, \Bell}, \tau_0)$.
By conformal covariance the slice is also generated by all conformally related data sets as well.

Proposition \ref{prop:cmc} excludes the case $\mu=0$ or $\tau_0=0$, which we turn our attention
to now.
\begin{proposition}\label{prop:zero}
Let $\Bell=(\ell_1,\ldots,\ell_n)$ be a vector of positive numbers,
let $\mathbf b=(b_1,\ldots, b_n)$ be a vector of offset Kasner exponents,
and let $\mu$ and $\tau_0$ be real numbers with at least one equal to zero.
The conformal data set $(g_{\Bell}, \mu \sigma_{\mathbf b, \Bell}, \tau_0)$
on the torus $T^n$ has no associated solution of the constraint equations
unless both $\mu=0$  and $\tau_0=0$, in which case it generates
the family of static toroidal spacetimes with metrics
\begin{equation}\label{eq:statictoroidal}
-dt^2 + \sum_{k=1}^n (r\ell_k)^2 (ds^k)^2.
\end{equation}
for every $r>0$.
\end{proposition}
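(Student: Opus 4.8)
The plan is to observe that, because the mean curvature is the constant $\tau_0$, solving the CTS-H system \eqref{eq:ctsh} reduces to solving the CMC Lichnerowicz--York equation \eqref{eq:CMC-LY} for a conformal factor $\phi$ (the densitized lapse being irrelevant, and the accompanying $W$ being an arbitrary conformal Killing field). Since $g_\Bell$ is flat we have $R_{g_\Bell}\equiv 0$, and since $|\mu\sigma_{\mathbf b,\Bell}|^2_{g_\Bell}=\mu^2\kappa$ by \eqref{eq:normsigb}, equation \eqref{eq:CMC-LY} becomes the same equation \eqref{eq:lichcmc} studied in Section~\ref{sec:CMC},
\[
-2\kappa q\,\Lap\phi-\mu^2\kappa\,\phi^{-q-1}+\kappa\tau_0^2\,\phi^{q-1}=0\qquad\text{on }T^n,
\]
but now with $\mu$ or $\tau_0$ allowed to vanish. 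So the whole proof amounts to analyzing this equation on the closed torus in the three remaining sign configurations.

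For the cases where exactly one of $\mu,\tau_0$ vanishes I would integrate over $T^n$. If $\mu=0$ and $\tau_0\neq 0$ the equation reads $2q\,\Lap\phi=\tau_0^2\,\phi^{q-1}$, whose right-hand side is pointwise positive because $\phi>0$ and $q-1>0$; but $\int_{T^n}\Lap\phi\,dV=0$, a contradiction, so no solution exists. The case $\tau_0=0$, $\mu\neq 0$ is symmetric, with $2q\,\Lap\phi=-\mu^2\,\phi^{-q-1}<0$ and the same integral obstruction. (Both are instances of the Yamabe-null exceptional case $\mathcal{Y}_0$, the flat torus being Yamabe null.)

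For the remaining case $\mu=\tau_0=0$ the equation degenerates to $\Lap\phi=0$, whose only positive solutions on the compact torus $T^n$ are the constants $\phi\equiv c$, $c>0$. By the CMC reduction recalled above the accompanying $W$ is a conformal Killing field, so $\ck_{g_\Bell}W=0$; hence every solution of the constraint equations produced by this data set is
\[
\ol g=c^{q-2}g_\Bell,\qquad \ol K=c^{-2}\Bigl(\sigma+\tfrac1{2N}\ck_{g_\Bell}W\Bigr)+\tfrac{\tau_0}{n}\ol g=0,
\]
and writing $r=c^{(q-2)/2}$ (a bijection of $(0,\infty)$ onto itself, since $q-2=4/(n-2)>0$) this pair is exactly $(g_{r\Bell},0)$. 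I would then verify directly that $(g_{r\Bell},0)$ is the Cauchy data of a constant-$t$ slice of the static toroidal spacetime $\Reals\times T^n$ with metric $-dt^2+g_{r\Bell}$: that metric is flat, hence Ricci flat, and such a slice is totally geodesic with induced metric $g_{r\Bell}$, matching \eqref{eq:statictoroidal} and Definition~\ref{def:cylindrical}. Letting $c$ range over $(0,\infty)$ yields the asserted one-parameter family.

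There is no serious obstacle here; the only points needing a little care are administrative---confirming that the conformal Killing freedom in $W$ contributes nothing new (all such $W$ satisfy $\ck_{g_\Bell}W=0$ and hence give the same $(\ol g,\ol K)$), and confirming that distinct constants $c$ give geometrically distinct initial data, so the family is genuinely one-parameter rather than a single solution in disguise. Everything else is the one-line maximum-principle / integration-by-parts argument for the Laplacian on a closed manifold.
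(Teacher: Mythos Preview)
Your proof is correct and follows essentially the same route as the paper: integrate the reduced Lichnerowicz--York equation over the torus to rule out the cases where exactly one of $\mu,\tau_0$ vanishes, and in the doubly vanishing case observe that the equation degenerates to $\Lap\phi=0$, whose positive solutions are the constants, yielding the homothety family of static toroidal data. The only cosmetic difference is that the paper handles both nonexistence cases at once via the single identity $\tau_0^2\int\phi^{q-1}=\mu^2\int\phi^{-q-1}$, whereas you split into two subcases; your treatment is slightly more careful in explicitly noting that constants are the \emph{only} harmonic solutions and that the conformal Killing freedom in $W$ is inert.
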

\begin{proof}
The Lichnerowicz-York equation \eqref{eq:CMC-LY} for $(g_{\Bell}, \mu \sigma_{\mathbf b, \Bell}, \tau_0)$ reads
\begin{equation}\label{eq:lich2}
-2\kappa q\Lap\phi -\mu^2 \kappa \phi^{q-1} + \kappa \tau_0^2 \phi^{q-1} = 0
\end{equation}
regardless of whether $\mu$ or $\tau_0$ vanish.  Integrating this equation over the torus we find
\begin{equation}
\tau_0^2 \int_{T^n} \phi^{q-1} dV_{\mathbf \ell} = \mu^2 \int_{T^n} \phi^{-q-1} dV_{\mathbf \ell}
\end{equation}
and hence if one of $\tau_0$ or $\mu$ vanish, there is no solution unless the other vanishes as well.
If both vanish, equation \eqref{eq:lich2} is solved by $\phi=c$ for every positive constant $c$ and
we have the associated Cauchy data
\begin{equation}
\begin{aligned}
\ol g &= c^{q-2} g_\Bell \\
\ol K &= 0.
\end{aligned}
\end{equation}
Letting $r=c^{q/2-1}$, this Cauchy data evidently generates the constant $t$ 
slices of the static toroidal spacetimes with metrics \eqref{eq:statictoroidal}.
\end{proof}

Note that the Kasner exponents $\mathbf a$ are not determined uniquely by 
the choice of offset exponents $\mathbf b$ in Proposition \ref{prop:cmc}.  We see instead that
the exponents are one of
\begin{equation}\label{eq:dual}
\mathbf{a_\pm} = \pm\mathbf b + (1/n,\ldots,1/n)
\end{equation}
with the choice of sign determined by a combination of the signs of $\mu$ and $\tau_0$.
We say that the pair of Kasner exponent vectors in equation \eqref{eq:dual} are \textbf{dual}
to each other.   The flat Kasner metrics correspond, e.g., to the exponents
\begin{equation}
\mathbf a_\flat = (1, 0, \ldots, 0)
\end{equation}
and the associated offset Kasner exponents are
\begin{equation}
\begin{aligned}
\mathbf b_\flat &= (1, 0, \ldots, 0) - (1/n,\ldots,1/n)\\
& =  (\kappa, -1/n,\ldots, -1/n).
\end{aligned}
\end{equation}
Hence the dual exponents to $\mathbf a_\flat$ are
\begin{equation}
\begin{aligned}
\mathbf a_{\flat}' &= -\mathbf b_\flat + \left(\frac 1 n,\ldots,\frac 1 n\right) \\
&= \left(\frac{2-n}{n},\frac 2 n,\ldots, \frac 2 n\right)\\
&= \left(-\frac 2 q,\frac 2 n,\ldots, \frac 2 n\right),
\end{aligned}
\end{equation}
and correspond to the dual-to-flat Kasner spacetimes defined in 
Section \ref{sec:Kasner}.

We now define
\begin{equation}\label{eq:sigflat}
\sigma_\Bell^\flat = \sigma_{\mathbf b^\flat,\Bell}.
\end{equation}
From Propositions \ref{prop:cmc} and \ref{prop:zero} we see that 
conformal data set $(g_\Bell, \mu \sigma_\Bell^\flat, \tau_0)$ determines
either a flat Kasner spacetime, a dual-to-flat Kasner spacetime, a static toroidal spacetime,
or nothing depending on the values of $\mu$ and $\tau_0$.  Specifically, we have the following.

\begin{proposition}\label{prop:cmcflat}
Consider a CMC conformal data set $(g_\Bell, \mu\< \sigma^\flat_\Bell, \tau_0)$
on $T^n$ where $\Bell$ is a vector of lengths and
$\mu$ and $\tau_0$ are constants. Let $V_\Bell$ be the volume of $(T^n,g_\Bell)$; i.e.,
$V_\Bell = \ell_1\cdots\ell_n$.
\begin{enumerate}
\item If $\tau_0$ and $\mu$ are nonzero and have the same sign, then the conformal data set generates a
CMC slice of a flat Kasner spacetime with volume equal to $|\mu/\tau_0|V_\Bell$.
It is expanding if $\tau_0>0$ and contracting if $\tau_0<0$.
\item If $\tau_0$ and $\mu$ are nonzero and have opposite signs, then the conformal data set generates a 
CMC slice of a dual-to-flat Kasner spacetime
with volume equal to $|\mu/\tau_0|V_\Bell$.
It is expanding if $\tau_0>0$ and contracting if $\tau_0<0$.  
\item If $\tau_0=0$ and $\mu=0$ then the conformal data set generates a homothety family of CMC slices
of static toroidal spacetimes.
\item If one of $\tau_0$ or $\mu$ vanishes and the other does not, then the conformal data set does not generate a 
solution of the Einstein constraint equations.
\end{enumerate}
\end{proposition}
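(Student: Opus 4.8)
The plan is to read off Proposition~\ref{prop:cmcflat} as a corollary of Propositions~\ref{prop:cmc} and~\ref{prop:zero} by specializing to the particular offset exponent vector $\mathbf b^\flat$ and then checking the two sign cases against the definitions of flat and dual-to-flat Kasner spacetimes from Section~\ref{sec:Kasner}. Since $\sigma_\Bell^\flat = \sigma_{\mathbf b^\flat,\Bell}$ by \eqref{eq:sigflat}, the conformal data set $(g_\Bell,\mu\<\sigma^\flat_\Bell,\tau_0)$ is exactly of the form covered by Proposition~\ref{prop:cmc} when $\mu,\tau_0$ are both nonzero, and by Proposition~\ref{prop:zero} when at least one vanishes. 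So cases (3) and (4) are immediate from Proposition~\ref{prop:zero}: if exactly one of $\mu,\tau_0$ vanishes there is no solution, and if both vanish we get the homothety family of static toroidal spacetimes with metrics \eqref{eq:statictoroidal}. Nothing new is needed there.

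For cases (1) and (2) I would invoke Proposition~\ref{prop:cmc} with $\mathbf b = \mathbf b^\flat$. It produces a slice of the Kasner spacetime with exponents $\mathbf a = s\,\mathbf b^\flat + (1/n,\dots,1/n)$, where $s=+1$ when $\mu,\tau_0$ share a sign and $s=-1$ otherwise. When $s=+1$ this is $\mathbf a = \mathbf b^\flat + (1/n,\dots,1/n) = \mathbf a_\flat = (1,0,\dots,0)$, the flat Kasner exponents; when $s=-1$ this is $-\mathbf b^\flat + (1/n,\dots,1/n) = \mathbf a_\flat' = (-2/q,2/n,\dots,2/n)$, the dual-to-flat exponents. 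The expanding/contracting dichotomy and the claim that the slice sits at time $t_0 = 1/\tau_0$ come verbatim from Proposition~\ref{prop:cmc}. The only remaining point is the volume assertion: the time $t=t_0$ slice has metric $c^{q-2}g_\Bell$ with $c = |\mu/\tau_0|^{1/q}$, so its volume is $c^{(q-2)n/2}V_\Bell$. A short computation using $q = 2n/(n-2)$ gives $(q-2)n/2 = q$, hence $c^{(q-2)n/2} = c^q = |\mu/\tau_0|$, which is the stated factor $|\mu/\tau_0|\,V_\Bell$. One should note this is the volume of the slice at $t=t_0$ itself (equivalently, since the slices are related by homothety, it pins down the spacetime up to the usual rescaling).

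I do not expect a genuine obstacle here; the statement is essentially a bookkeeping consolidation of the two preceding propositions plus the identification of $\mathbf a_\flat$ and $\mathbf a_\flat'$ already carried out in the text preceding \eqref{eq:sigflat}. The one place to be slightly careful is the exponent arithmetic in the volume claim—making sure $(q-2)n/2$ really equals $q$ so that the conformal factor on the volume collapses to the clean ratio $|\mu/\tau_0|$—and making sure the phrase ``flat Kasner spacetime with volume $|\mu/\tau_0|V_\Bell$'' is interpreted as the volume of the constructed slice, consistent with how volume is used to parameterize these spacetimes elsewhere in the paper.
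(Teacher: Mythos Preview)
Your proposal is correct and follows essentially the same approach as the paper's proof: both reduce cases (1) and (2) to Proposition~\ref{prop:cmc} with $\mathbf b=\mathbf b^\flat$, identify the resulting exponents as $\mathbf a_\flat$ or $\mathbf a_\flat'$ according to the sign $s$, and handle cases (3) and (4) by direct appeal to Proposition~\ref{prop:zero}. The volume computation is also the same---the paper writes $q-2=2q/n$ and reads off the volume as $c^q V_\Bell$, which is equivalent to your verification that $(q-2)n/2=q$.
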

\begin{proof}
Suppose $\mu$ and $\tau_0$ are nonzero and have the same sign.  Proposition \ref{prop:cmc} implies
that a conformal data set $(g_\Bell, \mu\< \sigma^\flat_\Bell, \tau_0)$ determines a Kasner spacetime
with exponents
\begin{equation}
\mathbf a = \mathbf b_\flat + \left(\frac1 n, \ldots, \frac1 n\right) = (1, 0, \dots, 0),
\end{equation}
i.e., a flat Kasner spacetime, that is expanding or contracting depending on the sign of $\tau_0$.
The solution metric is $c^{q-2} g_\Bell$ where
\begin{equation}
c^q = \left| \frac{\mu}{\tau_0}\right|.
\end{equation} 
Since
\begin{equation}
q-2 = \frac{4}{n-2} = 2\frac{q}{n}
\end{equation}
we can rewrite the solution metric as
\begin{equation}
c^{2\frac{q}{n}} g_\Bell
\end{equation}
which has volume
\begin{equation}
c^q V_\Bell = \left| \frac{\mu}{\tau_0}\right|V_\Bell .
\end{equation}
This establishes the claim in case 1 of the proposition.

Turning to case 2, if $\mu$ and $\tau_0$ are nonzero but have opposite signs, then Proposition \ref{prop:cmc} implies
that the conformal data set generates a Kasner spacetime with exponents
\begin{equation}
\mathbf a = -\mathbf b_\flat + (1/n,\ldots,1/n) = {\mathbf a}_\flat',
\end{equation}
a dual-to-flat spacetime.  The remainder of case 2 is proved as in case 1, and cases 3 and 4 are immediate consequences
of Proposition \ref{prop:zero}.
\end{proof}

In Proposition \ref{prop:cmcflat}, if we fix the lengths $\Bell$, then there are two parameters to 
adjust: $\mu$, which controls the size of the TT tensor, and $\tau_0$, which controls the mean
curvature.  Figure \ref{fig:cmcflat} shows how $\mu$ and $\tau_0$ then determine
the associated spacetime.
The ratio $|\mu/\tau_0|$ is constant
on every line through the origin in Figure \ref{fig:cmcflat} , and therefore so
is the volume of the solution metric.  A graph of the volume of the solution as a function of the parameters
$\mu$ and $\tau_0$ is therefore a helicoid-like ruled surface with volume tending to 
infinity as $\tau_0\ra 0$ and tending to zero as $\mu$ (i.e., the size of the TT tensor)
goes to zero.  The homothety family
at $\mu=0$ and $\tau_0=0$ is the axis of the `helicoid'.

\begin{figure}
\ifjournal\else\hfil\fi
\includegraphics{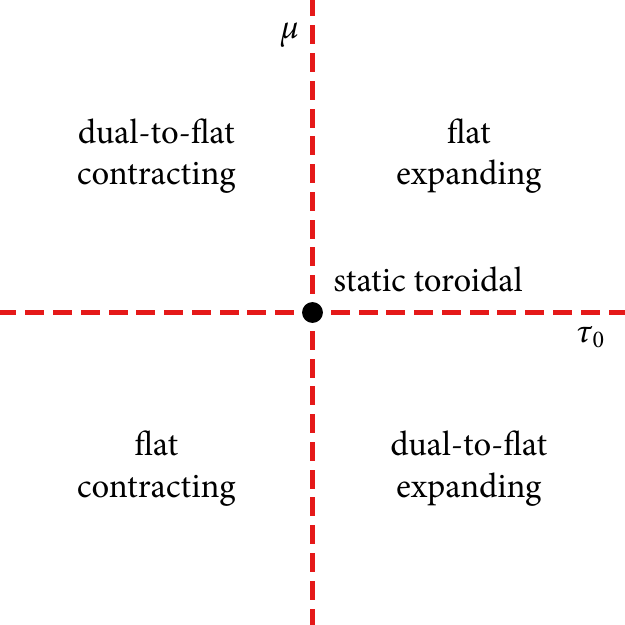}
\caption{\label{fig:cmcflat} CMC conformal method construction of slices of flat Kasner, dual-to-flat Kasner, and static
toroidal spacetimes.  Dashed red lines imply no solution, and the dot at the origin corresponds to a 
homothety family of static toroidal solutions.}
\end{figure}

\section{Spacelike Curves in Kasner Surfaces}\label{sec:curves}

We wish to generalize Proposition \ref{prop:cmcflat} to include mean curvatures
that are a function of the first factor of the torus.  To this end, we
start with the related problem of specifying a curve in a Kasner surface
in terms of its curvature.  That is, given a curvature function $\tau$ on 
$S^1$, we wish to find a Kasner surface $\mathcal K_\Psi$,
for some aperture $\Psi$, and a constant speed (with respect to the standard metric)
curve $\gamma:S^1\ra \mathcal K_\Psi$
such that $\gamma$ has the specified curvature.  The following 
proposition shows that this problem is solvable, and we can additionally
impose the length $\ell$ of the embedded curve.

\begin{proposition}\label{prop:curve}
Suppose $\tau$ is a smooth function on $S^1$,
and let $\ell>0$.
Define
\begin{equation}\label{eq:Theta}
\tau^\circ= \int_{S^1} \tau\; ds
\end{equation}
where $s$ is the unit coordinate on $S^1$ (so $\int_{S^1}1\;ds = 1$).
If $\tau^\circ\neq 0$ we construct an embedding of $S^1$ into
a Kasner surface $\calK_\Psi$ and if $\tau^\circ = 0$ we 
construct an embedding of $S^1$ into a flat Lorenzian cylinder
$\calC_L$ as described next.

Suppose $\tau^\circ\neq 0$, and let $\Psi=\ell \tau^\circ$.
Then there exists a smooth embedding $\iota:S^1\ra\calK_{\Psi}$ 
such that $\iota(S^1)$
is a spacelike curve
with induced metric $g$ and second fundamental form $K$
satisfying $\iota^* g = g_\ell$ and $\iota^* K=\tau \; g_\ell$.

If $\tau^\circ=0$, then
there exists $L>0$ and a smooth embedding $\iota:S^1\ra\calC_{L}$
such that $\iota(S^1)$
is a spacelike curve
with induced metric $g$ and second fundamental form $K$
satisfying $\iota^* g = g_\ell$ and $\iota^* K=\tau\; g_\ell$.
\end{proposition}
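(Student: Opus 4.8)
The plan is to lift the problem to the Minkowski cover $\Reals^{1,1}$ and run a Lorentzian analogue of the fundamental theorem of plane curves, arranging the ``periodicity up to an isometry'' needed for the curve to descend to a Kasner surface (or, in the degenerate case, a flat cylinder).

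First I would record the normal form of a rightward-winding spacelike curve of speed $\ell$: if $\gamma\colon\Reals\to\Reals^{1,1}$ has $\ip<\gamma'(s),\gamma'(s)>\equiv\ell^2$, then $\gamma'(s)=\ell X(\theta(s))$ for a unique smooth tangent boost angle $\theta$, and formula \eqref{eq:coordcurvature} gives that the curvature of $\gamma$ in the unit coordinate $s$ equals $\theta'(s)/\ell$. So to realize the prescribed curvature one takes $\theta(s)=\theta_0+\ell\int_0^s\tau$, which forces $\theta(s+1)-\theta(s)=\ell\int_{S^1}\tau=\ell\tau^\circ=\Psi$, and reconstructs $\gamma(s)=p_0+\ell\int_0^s X(\theta(\sigma))\,d\sigma$. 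By construction $\gamma$ is spacelike with $\ip<\gamma',\gamma'>\equiv\ell^2$ and curvature $\tau$ for any base point $p_0$, so the only real work is arranging equivariance and confining $\gamma$ to the correct half of the light cone. Using $B_\Psi X(\psi)=X(\psi+\Psi)$ and linearity of $B_\Psi$, one computes $\gamma(s+1)=B_\Psi\gamma(s)+(\gamma(1)-B_\Psi p_0)$; when $\Psi\neq0$ the map $B_\Psi-I$ is invertible, so there is a unique $p_0$ making $\gamma$ genuinely $B_\Psi$-equivariant. When $\Psi=0$ the angle $\theta$ is $1$-periodic and $\gamma(s+1)=\gamma(s)+w$ with $w=\ell\int_0^1 X(\theta)$; since the rightward spacelike vectors form an open convex cone containing every $\ell X(\theta(\sigma))$, the average $w$ lies in that cone and in particular is spacelike, so a boost $B$ carries $w$ to $(0,L)$ with $L=\sqrt{\ip<w,w>}>0$, and $B\circ\gamma$ becomes equivariant under the translation generating $\calC_L$.

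The one genuinely nontrivial point is that, for $\Psi\neq0$, the equivariant curve $\gamma$ really does take values in $I_+$ when $\Psi>0$ and in $I_-$ when $\Psi<0$, which is exactly what is needed for it to descend to $\calK_\Psi$. I would check this in null coordinates $(\xi,\eta)=(t+x,\,t-x)$, in which $\ip<v,v>=-\xi_v\eta_v$, $I_{\pm}=\{\pm\xi>0,\ \pm\eta>0\}$, the rightward spacelike cone is $\{\xi>0,\ \eta<0\}$, $X(\psi)=(e^{\psi},-e^{-\psi})$, and $B_\Psi=\diag(e^{\Psi},e^{-\Psi})$. Here $\gamma(s)-p_0=(\ell\int_0^s e^{\theta},\ -\ell\int_0^s e^{-\theta})$ and the equivariance equation pins down $p_0$ explicitly; for $s$ in the fundamental interval $[0,1]$ one then estimates each null component of $\gamma(s)$ and finds it has the sign dictated by $\sgn\Psi$ --- the inequality needed is just $0<e^{-|\Psi|}<1$ --- and equivariance propagates this to all of $\Reals$. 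I expect this confinement estimate to be the main obstacle; everything surrounding it is formal.

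Finally I would assemble the conclusion. Since $\frac{d}{ds}\xi(\gamma(s))=\ell e^{\theta(s)}>0$, the coordinate $\xi\circ\gamma$ is strictly increasing, so $\gamma$ is injective on $\Reals$; combined with equivariance this makes the descended map $\iota\colon S^1\to\calK_\Psi$ (respectively $S^1\to\calC_L$) an injective immersion out of a compact manifold, hence a smooth embedding. Because the covering projections $I_{\sgn\Psi}\to\calK_\Psi$ and $\Reals^{1,1}\to\calC_L$ are time- and space-orientation-preserving local isometries (and the auxiliary boost $B$ is an isometry), the induced metric and second fundamental form of $\iota(S^1)$ coincide with those of $\gamma$ itself: the first is $g_\ell$ because $\ip<\gamma',\gamma'>\equiv\ell^2$, and the second is $\tau\,g_\ell$ by the curvature identity recorded above. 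This establishes both cases of the proposition.
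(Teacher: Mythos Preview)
Your proof is correct and follows the same overall architecture as the paper's: build a speed-$\ell$ spacelike curve in $\Reals^{1,1}$ with prescribed boost angle $\theta$, arrange equivariance under $B_\Psi$ (or a translation when $\Psi=0$), confine it to $I_{\sgn\Psi}$, and descend. The tactical choices differ in two places, and your versions are arguably cleaner. For equivariance the paper normalizes $\gamma$ by a boost and two translations chosen geometrically so that the endpoints of a fundamental segment sit at $AT(\pm\Psi/2)$; you instead solve the linear equation $(B_\Psi-I)p_0=w$ directly, using that $B_\Psi$ has eigenvalues $e^{\pm\Psi}\neq 1$. For confinement to $I_+$ the paper observes that the fundamental segment lies in the convex hull of four explicit points, each of which is checked to lie in $I_+$; your null-coordinate argument replaces this by the monotonicity of $\xi\circ\gamma$ and $\eta\circ\gamma$, reducing the estimate to $\eta_{\gamma(1)}=\eta_{p_0}\cdot e^{-\Psi}>0$. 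Both arguments are short; yours generalizes more readily and also gives injectivity of $\gamma$ on all of $\Reals$ for free via strict monotonicity of $\xi\circ\gamma$, which then yields injectivity of $\iota$ on $S^1$---a point the paper asserts (``an immersion \dots since $S^1$ is compact it is also an embedding'') without explicitly supplying the injectivity step.
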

\begin{proof}
Lift $\tau$ to a 1-periodic function on $\Reals$ and
define for any $s\in\Reals$
\begin{equation}
\theta(s) = \int_0^s \tau(y)\; dy.
\end{equation}
Define 
\begin{equation}
\gamma_1(s) = \int_0^s \ell X(\ell\,\theta(y))\; dy
\end{equation}
where $X$ is the spacelike unit-vector function defined
in Section \ref{sec:prelims}.
So $\gamma_1$ is a smooth curve and $\gamma_1'(s) = \ell X(\ell \theta(s) )$
for all $s$.

\begin{figure}
\ifjournal\includegraphics{curveconstruction_j}\else
\hskip -0.8cm\includegraphics{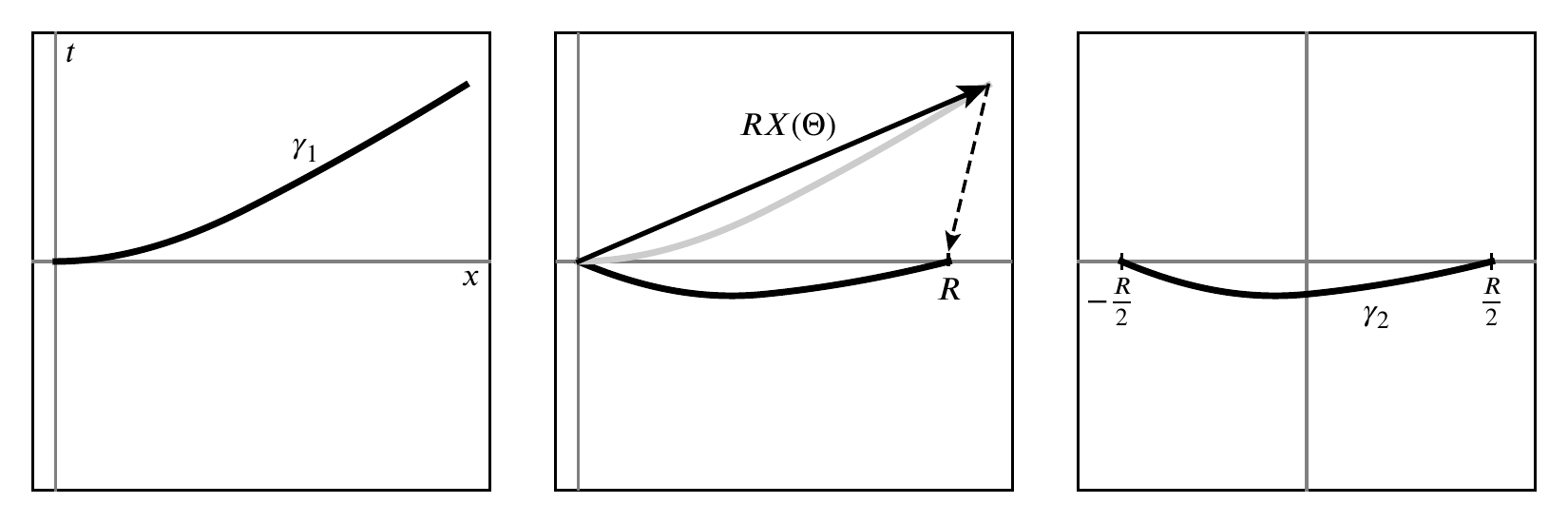}\fi

\caption{\label{fig:curve1}Transformation from $\gamma_1$ to $\gamma_2$: first boost, then translate
to obtain a curve with endpoints located at antipodal points on the $x$ axis.}
\end{figure}

Since $\gamma_1(1)$ is an integral of rightward-pointing spacelike
vectors, it is itself rightward pointing and spacelike.  So there
is $R>0$ and $\PsiFoo\in\Reals$ such that $\gamma_1(1)=RX(\PsiFoo)$.
Define $\gamma_2(s) = B_{-\PsiFoo} ( \gamma_1(s) ) - (0,R/2)$ where $B_{-\PsiFoo}$
is the boost of angle $-\PsiFoo$ about the origin.  Then
$\gamma_2(0)=(0,-R/2)$ and
\begin{equation}
\begin{aligned}
\gamma_2(1) &= B_{-\PsiFoo}(R X(\PsiFoo)) - (0,R/2) \\
&= (0,R) - (0,R/2)\\
&= (0,R/2).
\end{aligned}
\end{equation}
Figure \ref{fig:curve1} illustrates the construction of $\gamma_2$.

Let $\tau^\circ$ be defined by equation \eqref{eq:Theta} and suppose
$\tau^\circ>0$. Let  $\Psi = \ell\tau^\circ$
and let $T_1=T(-\Psi/2)$ and $T_2=T(\Psi/2)$. The lines through
$\gamma_2(0)$ and $\gamma_2(1)$ parallel to $T_1$ and $T_2$ respectively
meet at the point
\begin{equation}\label{eq:Q}
Q = (-(R/2)\,\coth(\Psi/2),0)
\end{equation}
and we define $\gamma_3 = \gamma_2 - Q$; see Figure \ref{fig:curve2}.

Let $\Pi$ be the projection from $I_+$ to $\calK_{\Psi}$.
We would like to show that $\gamma_3(s)$ lies in $I_+(0)$ for all $s$
so that we may form the composition $\Pi\circ\gamma_3$.  First, suppose 
$0< s < 1$.  Then $\gamma_3(s)-\gamma_3(0)$ is spacelike and rightward
pointing as it is an integral of such vectors.  Similarly,
$\gamma_3(s)-\gamma_3(1)$ is spacelike and leftward pointing.  So for all $s\in[0,1]$,
$\gamma_3(s)$ lies in the square with vertices
\begin{equation}
\begin{aligned}
P_1 &= \gamma_3(0) = \frac{R}{2}\left(\coth(\Psi/2),-1\right)\\
P_2 &= \gamma_3(1) = \frac{R}{2}\left(\coth(\Psi/2),1\right)\\
P_3 &= P_1 + (R/2,R/2) = P_2+(R/2,-R/2)= \frac{R}{2}\left(\coth(\Psi/2)+1,0\right)\\
P_4 &= P_1 + (-R/2,R/2) =  P_2+(-R/2,-R/2) = \frac{R}{2}\left(\coth(\Psi/2)-1,0\right).
\end{aligned}
\end{equation}
Each of these four points lies in the convex set $I_+(0)$, as does their convex hull.
Hence $\gamma_3(s)\in I_+(0)$ for all $s\in[0,1]$.

\begin{figure}
\ifjournal
\includegraphics{cc2_j}
\else\hskip -0.8cm
\includegraphics{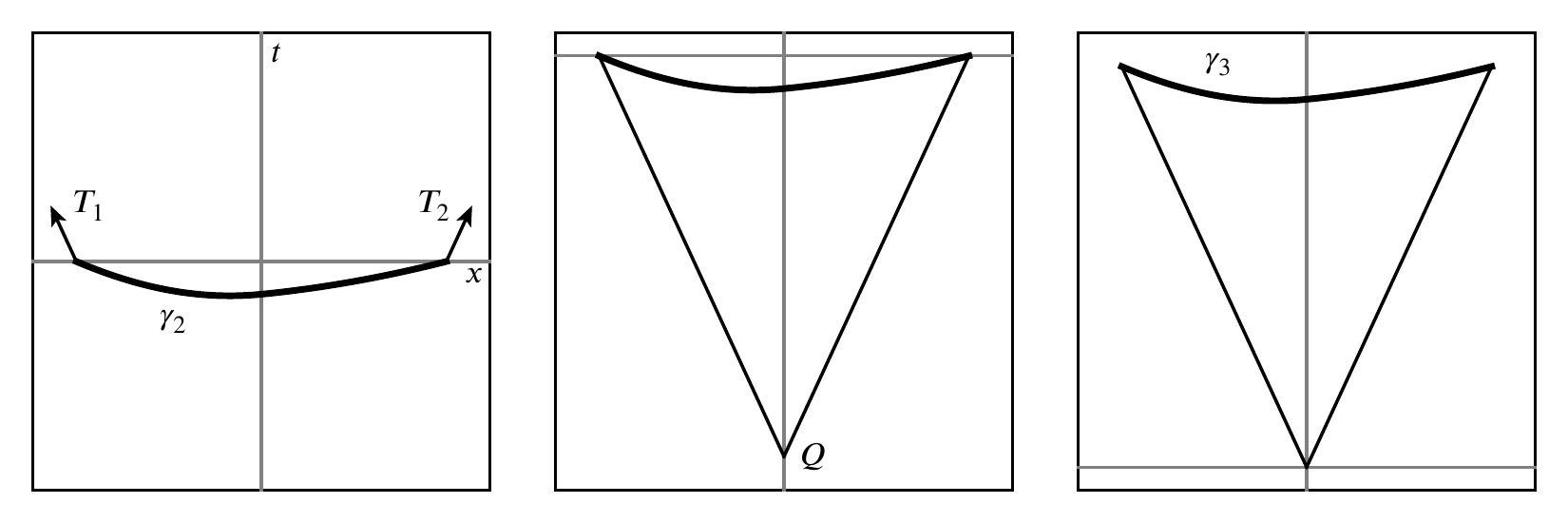}\fi
\caption{\label{fig:curve2}Transformation from $\gamma_2$ to $\gamma_3$: translate vertically
so that the lines from the curve endpoints to the origin meet at equal hyperbolic angles.}
\end{figure}

We now show that for all $k\in\Ints$,
$\gamma_3(s+ k) = B_{\Psi}^k \circ \gamma(s)$. Then, since every point
on $\gamma_3$ is the image under a boost of a point in $I_+(0)$, and 
since the set $I_+(0)$ is invariant under boosts about the origin, we will have 
shown that $\gamma_3(s)\in I_+(0)$ for all $s\in \Reals$.
In fact, an induction argument shows that it suffices to prove the fact for $k=1$; i.e.,
\begin{equation}
\gamma_3(s+1) = B_{\Psi} \circ \gamma(s)
\end{equation}
for all $s\in \Reals$.   To do this, let $\hat\gamma = B_{\Psi}\circ \gamma$.
A simple computation shows that
\begin{equation}
\gamma_3(0) = AT(-\Psi/2)\qquad\text{and}\qquad \gamma_3(1) = AT(\Psi/2)
\end{equation}
where $A=R/ (2\sinh(\Psi/2))$.
So
\begin{equation}
\hat\gamma(0) = B_{\Psi}(\gamma_3(0)) = B_{ \Psi}(A T(-\Psi/2)) = A T(\Psi/2) = \gamma_3(1).
\end{equation}
Moreover
\begin{equation}
\hat\gamma'(s) = B_{\Psi}(\gamma_3'(s)) = B_{\Psi} \ell X(\ell \theta(s) - \PsiFoo) = 
\ell X(\ell\theta(s)+\Psi - \PsiFoo).
\end{equation}
Now
\begin{equation}
\theta(s+1) = \ell\theta(s) + \ell\tau^\circ = \ell\theta(s) + \Psi
\end{equation}
by the $1$-periodicity of $\tau$ and the definition of $\Psi$.
Hence
\begin{equation}
\hat\gamma'(s) = \ell X(\ell\;\theta(s+1) - \PsiFoo) = \gamma'_3(s+1).
\end{equation}
Since $\hat\gamma(0)=\gamma(1)$ and $\hat\gamma'(s) = \gamma'(s+1)$ for all $s$,
we conclude that $B_{\Psi}\circ\gamma(s)=\hat\gamma(s) = \gamma(s+1)$ as claimed.

Since the image of $\gamma_3$ is contained in $I_+(0)$ we may form the projection
$\Pi\circ \gamma_3$ as desired.  Moreover, for all $s\in\Reals$ and $k\in \Ints$,
\begin{equation}
(\Pi\circ \gamma_3)(s+k) = (\Pi \circ B_{\Psi}^k)(\gamma_3(s)) = \Pi(\gamma_3(s)).
\end{equation}
Thus the curve $\gamma_3$ descends to a curve $\iota:S^1\ra \calK_{\Psi}$ such that
the following diagram commutes:
\begin{gather}
\xymatrix{\Reals \ar[d]_{\varepsilon} \ar[r]^{\gamma_3} & \Reals^{1,1}\ar[d]^{\Pi} \\
S^1 \ar[r]_{\iota} & \calK_{\Psi}\;, \\}
\end{gather}
where $\epsilon$ is the covering map mentioned in Section \ref{sec:prelims}.

It is easy to see that $\iota$ is an immersion; since $S^1$ is compact it is also an embedding.
Let $g$ and $K$ be the induced metric and second fundamental form of $\iota(S^1)$.
Since all of the transformations involved in this construction are local isometries,
to compute $g$ and $K$, it suffices to work
with the original curve $\gamma_1$.  Recall that $\gamma_1'(s)= \ell X(\ell\theta(s))$,
so $g((\epsilon\circ\iota)_*\partial_s,(\epsilon\circ\iota)_*\partial_s)=\ip<\gamma_1',\gamma_1'>_{R^{1,1}}=\ell^2$. Hence 
$\iota^* g = g_\ell$.
Moreover,
\begin{equation}
\gamma_1''(s) = \ell^2 T(\ell\theta(s))\theta'(s) = \ell^2 T(\ell\theta(s))\;\kappa(s)
\end{equation}
and the future pointing normal at $\gamma_1(s)$ is $T(\ell\theta(s))$.
So
\begin{equation}
\begin{aligned}
K((\epsilon\circ\iota)_*\partial_s,(\epsilon\circ\iota)_*\partial_s) &= 
-\ip<T(\ell\theta(s)),\gamma_1''>_{R^{1,1}}\\
& = 
-\ip<T(\ell\theta(s)),\ell^2 T(\ell\theta(s))\kappa(s)>_{R^{1,1}}\\
&=\tau(s)\ell^{2}
\end{aligned}
\end{equation}
and $K=\tau g_\ell$ as claimed. 

This completes the proof in the case $\tau^\circ>0$, and the case $\tau^\circ<0$ is proved similarly (and
also follows from the $\tau^\circ>0$ case by a time reflection argument). 

When $\tau^\circ=0$, we can no longer compute the intersection point $Q$ from
equation \eqref{eq:Q} and the argument must change.  We simply let $\gamma_3=\gamma_2$
and replace the boost $B_{\Psi}$ with the operator $T_R$ that translates by $(0,R)$.  
Note that
\begin{equation}
T_R(\gamma_3(0)) = \gamma_3(0) + (0,R) = (0,-R/2) + (0,R) = (0,R/2) = \gamma_3(1).
\end{equation}
Moreover, since $\tau^\circ=0$, $\theta(s+1) = \theta(s)$ for all $s$ and
\ifjournal
\begin{equation}
\begin{aligned}
(T_R\circ\gamma_3)'(s) &= T_R(\gamma_3'(s)) = \gamma_3'(s) \\
&\qquad= \ell X(\ell\theta(s)-\PsiFoo) = \ell X(\ell(\theta(s+1)-\PsiFoo) = \gamma_3'(s+1).
\end{aligned}
\end{equation}
\else
\begin{equation}
(T_R\circ\gamma_3)'(s) = T_R(\gamma_3'(s)) = \gamma_3'(s) = \ell X(\ell\theta(s)-\PsiFoo) = 
\ell X(\ell(\theta(s+1)-\PsiFoo) = \gamma_3'(s+1).
\end{equation}
\fi
Hence $T_R\circ\gamma_3(s) = \gamma_3(s+1)$ for all $s\in\Reals$.  It then follows that
$T_R^k\circ\gamma_3(s) = \gamma_3(s+k)$ for all $k\in \Ints$ and $s\in\Reals$.

Let $\Pi$ be the projection into $\calC_{R}$.  Since the domain of $\Pi$ is all of $\Reals^{1,1}$, 
the image of $\gamma_3$ trivially lies in the domain of $\Pi$. We can then form the projection
$\Pi\circ \gamma_3$, and and the poof that this curve descends to a smooth embedding of $S^1$ into $\calC_R$
with the desired first and second fundamental forms now proceeds as before.
\end{proof}

For ease of presentation, we only considered smooth curvatures $\tau$ in Proposition \ref{prop:curve},
but it is easy to extend it to curvatures with lesser regularity.  For example, 
suppose $\tau\in L^\infty(S^1)$ instead.
Following the proof of Proposition \ref{prop:curve}, we find that the curve $\gamma_1$ is $C^{1,1}(S^1)$ (i.e., has
Lipschitz continuous first derivatives). The transformed curve $\gamma_3$ is also $C^{1,1}(S^1)$ 
and the remainder of the proof goes through without change, noting that the curvature of $\gamma_3$
is defined  weakly and pointwise almost everywhere by the usual formula \eqref{eq:coordcurvature}.   
Indeed, the curvature used to construct Figures \ref{fig:curve1} and \ref{fig:curve2} 
is among the $L^\infty$ mean curvatures of the type \eqref{eq:jump} 
considered in \cite{Maxwell:2011if}.

\section{Simple Product Slices of Flat Kasner Spacetimes}\label{sec:non-cmc}

Consider a conformal data set on the torus
\begin{equation}\label{eq:udata}
(g_{\Bell}, \mu \sigma^\flat_\Bell, \tau, N)
\end{equation}
where $\Bell$ is a vector of lengths, $\mu$ is a constant,
$\sigma^\flat_\Bell$ is the transverse traceless tensor described
in equation \eqref{eq:sigflat}, and where the mean curvature $\tau$ and the densitized lapse $N$ are 
functions of the first coordinate $s^1$ of the torus only.  If $\tau$ is constant, then $N$ is irrelevant 
and the data set is exactly the type considered by Proposition \ref{prop:cmcflat}.
Thus it develops a flat Kasner, dual-to-flat Kasner, or static toroidal spacetime.
In this section we prove a partial generalization of this fact:
in many cases a non-CMC data set of the form \eqref{eq:udata} generates a flat Kasner 
or static toroidal spacetime as well.

The data set \eqref{eq:udata} has a $U^{n-1}$ symmetry, and it is
natural to seek a solution $(\phi,W)$ of the CTS-H equations
with the same symmetry. Writing $s$ and $\ell$ for $s^1$ and $\ell_1$,
suppose $\phi=\phi(s)$ and $W= w(s)\partial_{s}$
for some functions $\phi$ and $w$ on $S^1$.
A computation the shows that 
\begin{equation}\label{eq:sundires}
\begin{aligned}
-\Lap \phi &= \ell^2 \phi''\\
\ck W &= 2w' \sigma^\flat_\Bell \\
\mu \sigma^\flat_\Bell +\frac{1}{2N}\ck W &= \left(\mu+\frac{1}{N}w'\right) \sigma^\flat_\Bell \\
\div \left(\frac{1}{2N} \ck W\right) &= \left(\frac{1}{N} w'\right)' ds\\
d\tau &= \tau' ds
\end{aligned}
\end{equation}
where the primes denote derivatives with respect to $s$.  Using the fact that
\begin{equation}
\left|\sigma^\flat_\Bell\right|^2_{g_\Bell}=\kappa,
\end{equation}
the CTS-H equations read
\begin{subequations}\label{eq:CTS-H-reduced}
\begin{alignat}{2}
-2\kappa q \ell^2\, \phi'' - \kappa(\mu + (w'/N))^2\phi^{-q-1} + \kappa \tau^2\phi^{q-1} &= 0
\label{eq:CTS-H-red-ham} \\
(w'/N)' &= \phi^q \tau'. \label{eq:CTS-H-red-momentum}
\end{alignat}
\end{subequations}

Perhaps surprisingly, equations \eqref{eq:CTS-H-reduced} frequently admit solutions with a 
conformal factor $\phi=c$ for some constant $c$.  To see this, let us decompose
\begin{equation}
\tau(s) = \tau^* + \xi(s)
\end{equation}
where $\tau_0$ is a constant and 
\begin{equation}\label{eq:int_xi}
\int_{S^1} N(s) \xi(s)\; ds = 0;
\end{equation}
the constant $\tau^*$ is uniquely determined by
\begin{equation}
\tau^* = \frac{\int_{S^1} N \tau\; ds}{\int_{S^1} N \; ds}.
\end{equation}
Substituting $\phi=c$ into the
momentum constraint \eqref{eq:CTS-H-red-momentum} we find
\begin{equation}
\frac{w'}{N} = c^q (\tau^* + \xi + C)
\end{equation}
where the constant $C$ is determined by the condition
$\int_{S^1} w'\; ds = 0$.  We find that $C=-\tau^*$
and hence
\begin{equation}\label{eq:wpoverN}
\frac{w'}{N} = c^q \xi.
\end{equation}
Substituting $\phi=c$ and equation \eqref{eq:wpoverN} into the Hamiltonian constraint \eqref{eq:CTS-H-red-ham} we
obtain
\begin{equation}\label{eq:constsol}
-\kappa(\mu + c^q\xi)^2 c^{-q-1} + \kappa(\tau^* + \xi)^2c^{q-1} = 0.
\end{equation}
The constant $c$ is a solution of equation \eqref{eq:constsol} if and only if 
\begin{equation}
(c^{-q}\mu + \xi)^2 = (\tau^* + \xi)^2,
\end{equation}
or equivalently, if and only if
\begin{equation}\label{eq:d-mu-tau}
2(\tau^*-c^{-q}\mu)\xi = (c^{-2q}\mu^2-(\tau^*)^2).
\end{equation}
The right-hand size of equation \eqref{eq:d-mu-tau}
is constant, so if a solution $c$ exists, then either $\xi$
is constant or both sides vanish.  Note however that if $\xi$
is constant, the condition $\int N\xi ds=0$ ensures that $\xi\equiv0$
and $\tau\equiv \tau^*$.  Thus we find that there is a constant $c$ solving 
\eqref{eq:d-mu-tau} if and only if one of the following conditions holds
\begin{enumerate}
\item $\mu$ and $\tau^*$ are nonzero and have the same sign, in which case
\begin{equation}\label{eq:d-noncmc}
c = \left(\frac{\mu}{\tau^*}\right)^\frac{1}{q}.
\end{equation}
\item $\mu$ and $\tau^*$ are both zero, in which case any $c>0$ is a solution.
\item $\mu$ and $\tau^*$ are nonzero, have the opposite sign, and $\tau\equiv\tau^*$,
in which case
\begin{equation}
c = \left|\frac{\mu}{\tau^*}\right|^{\frac{1}{q}}.
\end{equation}
\end{enumerate}
Conformal data sets satisfying condition 3 also satisfy the conditions of the CMC construction
of Proposition \ref{prop:cmc} and generate a CMC slice of a dual-to-flat spacetime.
The first two conditions, however, are new, and we now examine the associated solutions
of the constraints.

The solution metric is
\begin{equation}
\begin{aligned}
\ol g &= c^{q-2} g_\Bell \\
&= g_{r\Bell}
\end{aligned}
\end{equation}
where 
\begin{equation}
r=c^{(q-2)/2} = \left( \frac{\mu}{\tau^*}\right)^\frac 1 n.
\end{equation}
To compute the second fundamental form of the solution we first note 
from equations \eqref{eq:sundires}, \eqref{eq:wpoverN}, as well as the fact 
that $c^q\tau^* = \mu$, that
\begin{equation}
\begin{aligned}
\mu \sigma^\flat_\Bell + \frac{1}{2N}\ck W &=
\mu \sigma^\flat_\Bell + \frac{w'}{N} \sigma^\flat_\Bell\\
&=(\mu  + c^q \xi) \sigma^\flat_\Bell\\
&=(c^q\tau^* + c^q \xi) \sigma^\flat_\Bell\\
&=c^q\tau \sigma^\flat_\Bell.
\end{aligned}
\end{equation}
Hence
\begin{equation}
\begin{aligned}
\ol K &= c^{-2}\left(\mu \sigma^\flat_\Bell + \frac{1}{2N}\ck W\right) + \frac{\tau}{n}c^{q-2}g_\Bell\\
&= c^{q-2}\tau(\sigma^\flat_\Bell + \frac{1}{n}g_\Bell)\\
&= r^2 \tau \ell^2 (ds)^2\\
&=\tau\; (r\ell)^2 (ds)^2.
\end{aligned}
\end{equation}
We summarize these computations as follows.
\begin{proposition}\label{prop:noncmc}
Consider a conformal data set $(g_\Bell, \mu \sigma^\flat_\Bell, \tau, N)$ on $T^n$
where $\mu$ is a constant and $\tau$ and $N$ are functions of the first factor of $T^n$ only.
Let
\begin{equation}\label{eq:tau_0}
\tau^* = \frac{\int_{S^1} N\tau\;ds}{\int_{S^1}N\; ds}
\end{equation}
where $s$ is the unit coordinate on $S^1$ and let $\xi = \tau - \tau^*$.  

If $\mu$ and $\tau^*$ are nonzero and have the same sign, then
there is a solution $(\phi, W)$  of the CTS-H equations such that
\begin{equation}
\phi\equiv c = \left(\frac{\mu}{\tau^*}\right)^\frac{1}{q}.
\end{equation}
The vector field $W$ is parallel to $\partial_{s^1}$ and
\begin{equation}\label{eq:ckW}
\frac{1}{2N} \ck W = c^q \xi \sigma^\flat_\Bell.
\end{equation}
The solution of the constraint equations generated by this solution of the CTS-H 
equations is
\begin{equation}\label{eq:nonCMCflat}
\begin{aligned}
\ol g &= g_{r\Bell} \\
\ol K &= \tau\; (r\ell_1)^2 (ds^1)^2
\end{aligned}
\end{equation}
where 
\begin{equation}
r=c^{(q-2)/2}=\left(\frac{\mu}{\tau^*}\right)^{\frac{1}{n}}.
\end{equation}

If $\mu$ and $\tau^*$ are both equal to zero, then for every
$c>0$ there is a solution $(\phi, W)$ of the CTS-H equations such that
$\phi\equiv c$ everywhere and such that equation
\eqref{eq:ckW} holds.  The associated solution of the constraint equations is
given by equations \eqref{eq:nonCMCflat} with $r=c^{(q-2)/2}$.
\end{proposition}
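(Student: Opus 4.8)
The plan is to exploit the $U^{n-1}$ symmetry of the data set \eqref{eq:udata} and search for a solution of the CTS-H equations with a \emph{constant} conformal factor $\phi\equiv c$ and a vector field of the form $W = w(s^1)\,\partial_{s^1}$. With this ansatz the reduction recorded in \eqref{eq:CTS-H-reduced} turns the CTS-H system into the two ordinary differential equations \eqref{eq:CTS-H-red-ham}--\eqref{eq:CTS-H-red-momentum} on $S^1$. First I would dispose of the momentum constraint \eqref{eq:CTS-H-red-momentum}: since $\phi\equiv c$ makes its right-hand side equal to $c^q\tau'$, a single integration gives $w'/N = c^q\tau + C$, and the requirement that $w$ be $1$-periodic, $\int_{S^1} w'\,ds = 0$, forces $C = -\tau^*$, yielding $w'/N = c^q\xi$ as in \eqref{eq:wpoverN}. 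One must then check that the antiderivative $w(s) = c^q\int_0^s N\xi$ really descends to $S^1$; this holds because $\int_{S^1} N\xi\,ds = 0$ by \eqref{eq:int_xi}, so $W$ is a genuine vector field on $T^n$ and not merely on its cover.

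Next I would substitute $\phi\equiv c$, $\phi''=0$, and \eqref{eq:wpoverN} into the Hamiltonian constraint \eqref{eq:CTS-H-red-ham}; using $|\sigma^\flat_\Bell|^2_{g_\Bell}=\kappa$ this collapses to the pointwise identity \eqref{eq:constsol}, which after clearing the common factor $\kappa c^{q-1}$ and expanding the squares is equivalent to $2(\tau^* - c^{-q}\mu)\xi = c^{-2q}\mu^2 - (\tau^*)^2$, that is, to \eqref{eq:d-mu-tau}. The right-hand side is constant in $s$, so a constant $c$ can satisfy this equation only if $\xi$ is constant or both sides vanish. If $\xi$ is constant then \eqref{eq:int_xi} forces $\xi\equiv 0$, which returns us to the CMC case already covered by Proposition \ref{prop:cmc}; in the genuinely non-CMC situation we therefore need $c^q\tau^* = \mu$ (from which $c^{-2q}\mu^2 = (\tau^*)^2$ follows automatically). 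This single algebraic relation has a positive solution $c$ precisely when $\mu$ and $\tau^*$ are both zero, in which case any $c>0$ works, or both nonzero with the same sign, in which case $c = (\mu/\tau^*)^{1/q}$ as in \eqref{eq:d-noncmc}; these are exactly the two cases of the proposition.

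Finally I would read off the generated solution of the constraint equations from the parameterization \eqref{eq:CTSHparams-coords}. From $\ck W = 2w'\sigma^\flat_\Bell$ and $w'/N = c^q\xi$ we get $\tfrac{1}{2N}\ck W = c^q\xi\,\sigma^\flat_\Bell$, which is \eqref{eq:ckW}; combining this with $c^q\tau^* = \mu$ yields $\mu\sigma^\flat_\Bell + \tfrac{1}{2N}\ck W = (\mu + c^q\xi)\sigma^\flat_\Bell = c^q\tau\,\sigma^\flat_\Bell$. Since the offset exponents defining $\sigma^\flat_\Bell$ via \eqref{eq:sigflat} are $(\kappa,-1/n,\dots,-1/n)$ and $\kappa + 1/n = 1$, we also have $\sigma^\flat_\Bell + \tfrac{1}{n}g_\Bell = (\ell_1)^2(ds^1)^2$. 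Substituting these into $\ol g = c^{q-2}g_\Bell$ and $\ol K = c^{-2}(\mu\sigma^\flat_\Bell + \tfrac{1}{2N}\ck W) + \tfrac{\tau}{n}\ol g$, and using $c^{q-2} = r^2$ with $r = c^{(q-2)/2} = (\mu/\tau^*)^{1/n}$ (recall $q-2 = 2q/n$), gives $\ol g = g_{r\Bell}$ and $\ol K = \tau\,(r\ell_1)^2(ds^1)^2$, i.e.\ \eqref{eq:nonCMCflat}. The case $\mu = \tau^* = 0$ runs identically, the only difference being that $c$ is no longer pinned down but ranges over all positive reals.

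The one genuinely nontrivial point, and the step I expect to be the main obstacle, is the dichotomy in the second paragraph: recognizing that the constant-conformal-factor ansatz, once fed through the momentum constraint, forces a \emph{purely algebraic} relation $c^q\tau^* = \mu$ back on the Hamiltonian side, and that the apparent alternative --- "$\xi$ constant" --- collapses to the CMC case by the normalization \eqref{eq:int_xi}. Everything else is bookkeeping inside the reduced system \eqref{eq:CTS-H-reduced} and a direct substitution into \eqref{eq:CTSHparams-coords}; the only other place that requires a moment's care is the periodicity check for $w$ in the first paragraph, which is precisely what ensures the constructed $W$ lives on the torus.
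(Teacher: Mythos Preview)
Your proposal is correct and follows essentially the same approach as the paper: the constant-$\phi$ ansatz, the integration of the reduced momentum constraint to obtain $w'/N = c^q\xi$, the reduction of the Hamiltonian constraint to the algebraic relation \eqref{eq:d-mu-tau} with its constant/nonconstant-$\xi$ dichotomy, and the final computation of $(\ol g,\ol K)$ all mirror the paper's argument preceding the proposition. One trivial slip: with your normalization the integration constant should be $C=-c^q\tau^*$ rather than $C=-\tau^*$, but your conclusion $w'/N=c^q\xi$ is correct regardless.
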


When $\tau^*\neq 0$, Proposition \ref{prop:noncmc} provides the existence of at least
one solution of the CTS-H equations, but it does not exclude the existence of others.  One
would expect that, at least under a near-CMC condition, there is only one.  However, the standard
near-CMC results do not apply here because of the conformal Killing fields on the flat torus,
and it remains to be seen if uniqueness holds when $\tau^*\neq 0$.
The one-parameter family of solutions when $\tau^*=0$ is, in the non-CMC case, an unusual 
feature of the conformal method.  The condition $\tau^*=0$ is equivalent to
\begin{equation}
\int_{S^1} N \tau\; ds = 0
\end{equation}
and we see that it can only hold if $\tau\equiv 0$ or if $\tau$ changes signs.
Although the literature contains a variety of notions of what constitutes a
near-CMC condition for the conformal method, they all
exclude the possibility that the mean curvature changes signs. Thus the one-parameter families
constructed here are a genuinely far-from-CMC phenomenon.  It
is important to keep in mind
that the condition $\tau^*=0$ depends not only on the mean curvature, but also
on the choice of densitized lapse.  If $\tau$ changes signs, one can always find a 
densitized lapse
such that $\tau^*=0$, but for a generic densitized lapse we will have $\tau^*\neq 0$.

Proposition \ref{prop:noncmc} only considers the cases where $\tau^*$ and $\mu$ have
the same sign, and is silent otherwise.  In Proposition \ref{prop:cmcflat},
the CMC version of Proposition \ref{prop:noncmc}, the condition
that $\tau^*$ and $\mu$  have the same sign is precisely the one 
that ensures that the generated solutions of the constraints are slices of
flat Kasner or static toroidal spacetimes.  We now wish to extend
this result to Proposition \ref{prop:noncmc}.

Consider a flat Kasner spacetime
\begin{equation}
\mathcal M = \mathcal K_\Psi \times T^{n-1}_{\hat\Bell}.
\end{equation}
An embedding $\iota: T^{n} \ra \mathcal M$ is called
a \textbf{simple product embedding} if there is a curve 
$\gamma: S^1 \ra \mathcal K_\Psi$ such that
\begin{equation}
\iota(s^1,\ldots,s^n) = \gamma(s^1)\times (s^2,\ldots,s^n);
\end{equation}
simple product embeddings into static toroidal spacetimes are defined similarly.
The solutions of the constraints 
constructed in Proposition \ref{prop:noncmc} are the simple product slices of flat Kasner or 
static toroidal spacetimes.

\begin{proposition}\label{prop:class}
Let 
\begin{equation}\label{eq:theta2}
\tau^\circ = \int_{S^1} \tau\; ds
\end{equation}
where here and in the following we continue with the notation of 
Proposition \ref{prop:noncmc}. 

If $\tau^\circ\neq 0$ then the metric
and second fundamental form from equations \eqref{eq:nonCMCflat}
are induced by a simple product embedding into the flat Kasner spacetime
\begin{equation}
\mathcal K_{\Psi} \times T^{n-1}_{\hat \Bell}
\end{equation}
where 
\begin{equation}
\Psi = (r\ell_1)\tau^\circ
\end{equation}
and
\begin{equation}
\hat\Bell = (r\ell^2,\ldots,r\ell^n).
\end{equation}
If $\tau^\circ = 0$ then 
the metric
and second fundamental form from equations \eqref{eq:nonCMCflat}
are induced by a simple product embedding into a static toroidal spacetime
\begin{equation}
\mathcal C_{L}  \times T^{n-1}_{\hat \Bell}
\end{equation}
for some $L>0$.
\end{proposition}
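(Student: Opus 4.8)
The plan is to recognize the Cauchy data $(\ol g,\ol K)$ of \eqref{eq:nonCMCflat} as a metric product whose nontrivial part is carried entirely by the first circle factor, and then to assemble the desired embedding factor-by-factor, calling on Proposition \ref{prop:curve} for the $S^1$ factor. Explicitly, $\ol g = g_{r\Bell}$ splits as $g_{r\ell_1}\oplus g_{\hat\Bell}$ with $\hat\Bell=(r\ell_2,\ldots,r\ell_n)$, while $\ol K = \tau\,(r\ell_1)^2(ds^1)^2$ is supported on the first $S^1$ factor, where it equals $\tau\, g_{r\ell_1}$. So it suffices to produce a constant-speed spacelike curve $\gamma\colon S^1\to\calK_\Psi$ (respectively $\gamma\colon S^1\to\calC_L$) with $\gamma^*g = g_{r\ell_1}$ and curvature $\tau$, and then to form its product with the identity of $T^{n-1}_{\hat\Bell}$.

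First I would invoke Proposition \ref{prop:curve} for the function $\tau$ with prescribed length $\ell = r\ell_1$. If $\tau^\circ\neq 0$, this produces, with $\Psi = \ell\tau^\circ = (r\ell_1)\tau^\circ$, a smooth embedding $\gamma\colon S^1\to\calK_\Psi$ whose image is spacelike with induced metric $g_{r\ell_1}$ and second fundamental form $\tau\, g_{r\ell_1}$; if $\tau^\circ = 0$, it produces an $L>0$ and a smooth embedding $\gamma\colon S^1\to\calC_L$ with the same induced data. In either case define $\iota\colon T^n\to\calK_\Psi\times T^{n-1}_{\hat\Bell}$ (resp.\ $\iota\colon T^n\to\calC_L\times T^{n-1}_{\hat\Bell}$) by $\iota(s^1,\ldots,s^n) = \gamma(s^1)\times(s^2,\ldots,s^n)$; this is a simple product embedding in the sense defined above, and it is a spacelike embedding since $\gamma$ is.

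It then remains to verify that the metric and second fundamental form induced by $\iota$ are exactly $\ol g$ and $\ol K$. Since the target is a metric product and $\iota$ is a product map, the induced metric is $\gamma^* g_{\calK}\oplus g_{\hat\Bell} = g_{r\ell_1}\oplus g_{\hat\Bell} = g_{r\Bell} = \ol g$. For the second fundamental form, note that at each point the $T^{n-1}_{\hat\Bell}$ directions are spacelike, tangent to $\iota(T^n)$, and span the whole $T^{n-1}$ factor, so the future-pointing unit normal to $\iota(T^n)$ is timelike, orthogonal to them, and therefore lies in the $\calK_\Psi$ (resp.\ $\calC_L$) factor, where it coincides with the lift of the future-pointing unit normal of $\gamma(S^1)$. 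Because the Levi-Civita connection of a metric product splits and $T^{n-1}_{\hat\Bell}$ is flat, $\nabla_{\partial_{s^i}}\partial_{s^j}$ has no component along this normal whenever $i\geq 2$ or $j\geq 2$, whereas for $i=j=1$ it reproduces the second fundamental form of $\gamma$; hence $\iota^* K = (\gamma^* K)\oplus 0 = \tau\, g_{r\ell_1}\oplus 0 = \ol K$. The only step with genuine content beyond bookkeeping is this splitting of the second fundamental form under a simple product embedding into a metric-product spacetime, namely that the unit normal has no $T^{n-1}$ component and that the mixed Hessian terms $\nabla_{\partial_{s^1}}\partial_{s^j}$, $j\ge 2$, pair trivially with it; both are immediate from the product form of the connection, and everything else is a direct appeal to Proposition \ref{prop:curve} and the formulas \eqref{eq:nonCMCflat}.
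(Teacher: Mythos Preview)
Your proof is correct and follows essentially the same route as the paper: invoke Proposition~\ref{prop:curve} with $\ell=r\ell_1$ to obtain the curve $\gamma$ into $\calK_\Psi$ (or $\calC_L$), form the simple product embedding $\iota(s^1,\ldots,s^n)=\gamma(s^1)\times(s^2,\ldots,s^n)$, and read off the induced metric and second fundamental form. The paper simply asserts the product splitting $\ol K=\tau(r\ell_1)^2(ds^1)^2+0$ without comment, whereas you spell out why the normal lies in the first factor and why the mixed and $T^{n-1}$ terms vanish; this is a welcome clarification but not a different argument.
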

\begin{proof}
Suppose that $\tau^\circ\neq 0$.  Let $\Psi = (r\ell_1)\tau^\circ$ and
let $\gamma:S^1\ra \mathcal K_\Psi$ be the curve given by Proposition
\ref{prop:curve} such that its inherited metric is $(r\ell_1)^2 (ds)^2$
and its second fundamental form is $\tau\; (r\ell_1)^2 (ds)^2$.
We define
\begin{equation}
\iota: T^n \ra \mathcal K_\Psi \times T^{n-1}_{\hat\Bell}
\end{equation}
by
\begin{equation}
\iota(s^1,\ldots, s^n) = \gamma(s^1) \times (s^2,\ldots, s^n).
\end{equation}
The induced metric on $T^n$ is
\begin{equation}\label{eq:myg}
\begin{aligned}
\ol g &= (r\ell_1)^2 (ds^1)^2 + \left[ 
(r\ell_2)^2 (ds^2)^2 + \cdots + (r\ell_n)^2 (ds^n)^2\right]\\
& = g_{r\Bell}
\end{aligned}
\end{equation}
and the second fundamental form is
\begin{equation}
\begin{aligned}\label{eq:myK}
\ol K &= \tau\; (r\ell_1)^2 (ds^1)^2 + 0 \\
&= \tau\; (r\ell_1)^2 (ds^1)^2. 
\end{aligned}
\end{equation}
Comparing equations \eqref{eq:myg} and \eqref{eq:myK} with
equations \eqref{eq:nonCMCflat} completes the proof in the case
$\tau^\circ\neq 0$.

The case where $\tau^\circ=0$ is proved similarly, replacing the
curve into $\mathcal K_\Psi$ with a curve into $\mathcal C_{L}$ for some $L>0$,
as provided by Proposition \ref{prop:curve}.
\end{proof}

Proposition \ref{prop:class} shows that initial data of the type considered in 
Proposition \ref{prop:noncmc} give rise to simple product embeddings into
flat Kasner or static toroidal spacetimes.  It is not hard to see that every
simple product embedding arises from some choice of this data.  Indeed,
the metric and second fundamental form for such an embedding will have the
form
\begin{equation}\label{eq:flatsol}
\begin{aligned}
\ol g &= g_{\Bell}\\
\ol K &= \tau\; (\ell_1)^2 (ds^1)^2
\end{aligned}
\end{equation}
for some vector of lengths $\Bell$ and some mean curvature $\tau$. 
Let $N$ be an arbitrary densitized lapse depending only on $s^1$ and
define $\tau^*$ by equation \eqref{eq:tau_0}.
Proposition \ref{prop:noncmc} then implies that
a conformal data set $(g_\Bell, \tau^* \sigma^\flat_\Bell, \tau, N)$
leads to the solution \eqref{eq:flatsol} of the constraint equations.
Thus Proposition \ref{prop:noncmc} gives a complete description of
the construction, via the conformal method, of simple product slices of 
flat Kasner and static toroidal spacetimes.

From Propositions \ref{prop:noncmc} and \ref{prop:class} we see the
importance of the quantities
\begin{equation}
\tau^* = \frac{\int_{S^1} N \tau \; ds}{\int_{S^1} N\; ds}
\end{equation}
and
\begin{equation}
\tau^\circ = \int_{S^1} \tau \; ds = \frac{\int_{S^1} \tau \; ds} {\int_{S^1} 1\; ds}.
\end{equation}
The value of $\tau^*$, in combination with the parameter $\mu$, determines 
if and how many flat spacetimes will be constructed.  If $\tau^*$ and $\mu$
have the same sign and are non-zero, there will be exactly one flat spacetime, 
and if they both vanish there will be a one-parameter family.
The parameter $\tau^\circ$ determines what type of flat spacetime will be constructed.
If $\tau^\circ>0$ it will be an expanding flat Kasner solution, if $\tau^\circ=0$ it will be static
toroidal, and if $\tau^\circ<0$ it will be a contracting flat Kasner solution.

To visualize this situation it is helpful to make a diagram analogous to Figure
\ref{fig:cmcflat}.  Consider a conformal data set
\begin{equation}
(g_{\Bell}, \mu \sigma_{\Bell}^\flat, \tau_a=a+\xi, N)
\end{equation}
where $N$ and $\xi$ are fixed functions and $\mu$ and $a$ are constants that
we will adjust.  The parameter $\mu$ controls the size of the TT tensor, and $a$
controls whether the data is near-CMC or not, with $|a|$ large relative to $\osc \xi$ 
being a near-CMC condition.  
In the CMC case, we could take $\xi\equiv 0$, in which case we would arrive at the diagram of
Figure \ref{fig:cmcflat} with $a$ in place of $\tau_0$.

\begin{figure}
\hfil\includegraphics{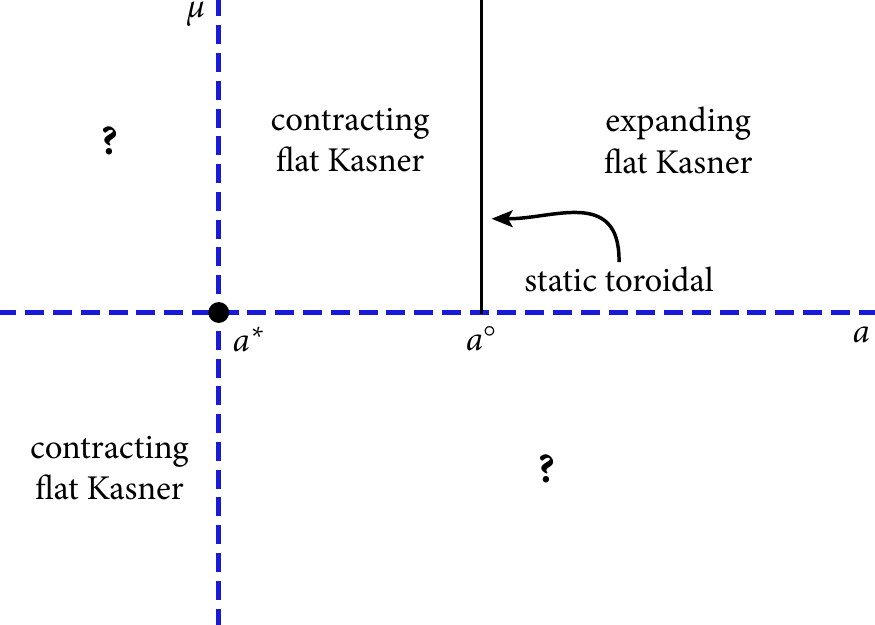}
\caption{\label{fig:noncmcflat}Conformal method construction for a data set $(g_{\Bell}, \mu \sigma_{\Bell}^\flat, a+\xi, N)$
in the case $a^* < a^\circ$.
Dashed blue lines are the boundaries of regions
where the number and type of solution is unknown, and
the dot at $\mu=0$, $a=a^*$ corresponds to a 
one-parameter family of contracting flat Kasner solutions. Compare with Figure \ref{fig:cmcflat}.}
\end{figure}

If $a$ is positive and large enough so that $\tau_a = a+\xi$ is positive, then
both $\tau^*$ and $\tau^\circ$ will be positive.  So if $\mu$ is also positive,
then we will construct a single expanding flat Kasner spacetime from the
conformal data.  Similarly, if $a$ is negative and $|a|$ is large enough so that
$a+\xi$ is negative, then $\tau^*$ and $\tau^\circ$ will be negative, and
if $\mu$ is also negative we will construct a single contracting flat Kasner spacetime.
There are two special intermediate values of $a$; let
\begin{equation}
a^* = -\frac{\int_{S^1} N \xi \; ds}{\int_{S^1} N\; ds}
\end{equation}
and let
\begin{equation}
a^\circ = -\frac{\int_{S^1} \xi \; ds}{\int_{S^1} 1\; ds}.
\end{equation}
When $a=a^*$ then $\tau^*=0$ and we pick up a one-parameter family of solutions (so long as $\mu=0$ as well).
When $a=a^\circ$, then $\tau^\circ=0$ and for this choice of $a$ we construct static toroidal solutions rather
than flat Kasner solutions; if $a>a^\circ$ we construct expanding flat Kasner solutions and if $a<a^\circ$ we construct contracting flat Kasner solutions.  
The situation is summarized in Figure \ref{fig:noncmcflat} in the case where $a^*< a^\circ$.

In the quadrants in Figure \ref{fig:noncmcflat} with question marks, $\mu$ and $\tau^*$
have opposite signs and therefore Proposition \ref{prop:noncmc} does not apply.
We do not know how many solutions may exist in this case, but we do know that
if a solution exists, it will not be a simple product slice of a flat Kasner or static
toroidal spacetime.  Moreover, one readily verifies that the volumes of the slices constructed
by Proposition \ref{prop:noncmc} are proportional to $\mu/\tau^*$.
So the volume of the solution metric 
is constant on every line of positive slope through the dot in Figure \ref{fig:noncmcflat},
and is indeed proportional to the slope of the line.  The horizontal dashed line corresponds to zero
volume and the vertical dashed line corresponds to infinite volume.

\subsection{Relationship with the construction from \cite{Maxwell:2011if}}\label{sec:jump}

The study in \cite{Maxwell:2011if} considers conformal data sets
\begin{equation}
(g_\Bell, \sigma, \tau, N)
\end{equation}
on  $T^n$ where $N$ is a function of $s^1$ only,
$\tau$ is a member of a particular family of mean curvatures depending on $s^1$ only,
and $\sigma$
is an arbitrary TT-tensor with vanishing Lie derivative in the $s^2$ through $s^k$ directions.
It was shown in \cite{Maxwell:2011if} that the role of such a $\sigma$ in the CTS-H equations
can be reduced to a function $\eta(s^1)$ and a constant $\mu$, and the 
CTS-H equations for solutions with the same $U^{n-1}$ symmetry
as this data take the form 
\begin{subequations}\label{eq:CTS-H-reduced-11}
\begin{alignat}{2}
-2\kappa q \ell^2\, \phi'' - 2\eta^2  -\kappa(\mu + (w'/N))^2\phi^{-q-1} + \kappa \tau^2\phi^{q-1} &= 0\\
(w'/N)' &= \phi^q \tau'.
\end{alignat}
\end{subequations}
These are equivalent to equations (20) from \cite{Maxwell:2011if} (with $w/2$ in \cite{Maxwell:2011if} 
corresponding with $w$ in this paper).  Comparing equations \eqref{eq:CTS-H-reduced-11} with equations
\eqref{eq:CTS-H-reduced} we see that the TT-tensors considered in Propositions \ref{prop:noncmc}
and \ref{prop:class} correspond to $\eta\equiv 0$ in \cite{Maxwell:2011if}.  The mean curvatures
from \cite{Maxwell:2011if} have the specific form $\tau_a(s) =a + \xi_{\rm jump}$ where $a$ is a constant and
\begin{equation}
\xi_{\rm jump}(s) = \begin{cases} 1  & 0 < s < \frac 1 2\\
-1 & \frac 1 2 < s < 1.\end{cases}
\end{equation}
Thus the overlap between the two papers corresponds either
to $\eta\equiv 0$ in \cite{Maxwell:2011if},
or equivalently to conformal data sets of the type
\begin{equation}\label{eq:overlapdata}
(g_\Bell, \mu \sigma^\flat_\Bell, \tau_a, N)
\end{equation}
where $N$ depends only on $s^1$ and $\tau_a = a + \xi_{\rm jump}$. 

Most of the interesting results from \cite{Maxwell:2011if} were
obtained under the hypothesis $\mu=0$ and $\eta$ is arbitrary.
This is in contrast to the situation here where $\mu$
is arbitrary and $\eta\equiv 0$.  Nevertheless, it is instructive
to compare the results between the two papers on their
common region of parameter space. To do this, we first
observe that
\begin{equation}\label{eq:overlapastar}
a^* = - \frac{\int_{S^1}N\xi_{\rm jump}\; ds}{\int_{S^1}N\; ds}
\end{equation}
also appears in \cite{Maxwell:2011if}, where it was
denoted $\gamma_N$.  With this change of notation, we translate 
Theorems 3.1 through 3.3 of \cite{Maxwell:2011if}, under the assumption that $\eta\equiv0$, as
follows.
\begin{proposition}[\cite{Maxwell:2011if}]\label{prop:overlap1}
Consider a conformal data set of the form \eqref{eq:overlapdata} and let $a^*$
be defined by \eqref{eq:overlapastar}.
\begin{itemize}
\item If $|a|>1$ (i.e., if $\tau_a$ has constant sign) then there is at least one solution of system \eqref{eq:CTS-H-reduced-11}.
\item If $|a^*|<|a|<1$ and if $\mu$ is sufficiently small, with smallness depending on $a$, there is at least
one solution of system \eqref{eq:CTS-H-reduced-11}.
\item If $a=a^*$ and $\mu=0$ then there is a one-parameter family of solutions.
\end{itemize}
\end{proposition}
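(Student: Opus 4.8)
The plan is to deduce this proposition by translation rather than by a fresh analytic argument. The first two bullets are the $\eta\equiv 0$ specializations of Theorems~3.1 and~3.2 of \cite{Maxwell:2011if}, and the third follows either from Theorem~3.3 of \cite{Maxwell:2011if} or, more directly, from Proposition~\ref{prop:noncmc} above; so the bulk of the work is simply verifying a dictionary.

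First I would record the correspondence between the two setups. Putting $\eta\equiv 0$ in \eqref{eq:CTS-H-reduced-11} produces exactly the reduced CTS-H system of \cite{Maxwell:2011if} (its equations~(20)), under the notational correspondence noted in the text (the object ``$w/2$'' of \cite{Maxwell:2011if} corresponding to ``$w$'' here); the mean curvature $\tau_a = a + \xi_{\rm jump}$, the constant $\mu$, and the densitized lapse $N$ play identical roles, and the constant $a^*$ of \eqref{eq:overlapastar} is the $\gamma_N$ of \cite{Maxwell:2011if}. Under this identification one checks that the three hypotheses ``$|a|>1$'', ``$|a^*|<|a|<1$ with $\mu$ small'', and ``$a=a^*$, $\mu=0$'' are precisely the hypotheses of Theorems~3.1, 3.2 and~3.3 of \cite{Maxwell:2011if} restricted to $\eta\equiv 0$; in particular $|a|>1$ is equivalent to $\tau_a$ having constant sign, since $\xi_{\rm jump}$ takes only the values $\pm 1$. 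Granting the dictionary, the first two bullets are immediate.

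For the third bullet I would, in addition, give the short self-contained argument, since the data set \eqref{eq:overlapdata} is exactly a $U^{n-1}$-symmetric data set of the kind treated by Proposition~\ref{prop:noncmc}, with mean curvature $\tau = \tau_a$ and TT-tensor $\mu\,\sigma^\flat_\Bell$. When $a=a^*$ we have $\tau^* = a - a^* = 0$, and together with $\mu=0$ this places us in the second alternative of Proposition~\ref{prop:noncmc}: for each $c>0$ there is a solution of \eqref{eq:CTS-H-reduced-11} with $\phi\equiv c$ and $\tfrac{1}{2N}\ck W$ given by \eqref{eq:ckW}. These solutions are pairwise distinct (the induced metrics $g_{r\Bell}$ have different volumes, $r=c^{(q-2)/2}$), so they constitute a genuine one-parameter family.

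The only point needing care --- and the nearest thing to an obstacle --- is regularity: $\tau_a$ lies in $L^\infty(S^1)$ but not in $W^{1,p}$, so $\tau_a'$ must be read distributionally (a difference of two point masses). This is harmless for the constant-$\phi$ ansatz, because the momentum constraint then integrates to \eqref{eq:wpoverN}, i.e.\ $w'/N = c^q\xi$ with $\xi = \tau_a - \tau^*$ a bounded step function, so that $w\in W^{1,\infty}(S^1)$; and the Hamiltonian constraint reduces, at almost every point, to the algebraic identity \eqref{eq:constsol}, which for $\mu = \tau^* = 0$ holds trivially. Beyond this bookkeeping there is no fresh analysis to carry out here: the substantive existence proofs in the constant-sign regime $|a|>1$ and the small-$\mu$ regime $|a^*|<|a|<1$ are exactly those of \cite{Maxwell:2011if}, invoked through the dictionary above.
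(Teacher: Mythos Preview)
Your proposal is correct and matches the paper's own treatment: the proposition is stated there without proof, introduced simply as a translation of Theorems~3.1--3.3 of \cite{Maxwell:2011if} under the specialization $\eta\equiv 0$, with the dictionary $a^*\leftrightarrow\gamma_N$ and $w\leftrightarrow w/2$ made explicit in the surrounding text. Your write-up supplies more detail than the paper does---in particular the self-contained derivation of the third bullet via Proposition~\ref{prop:noncmc} and the regularity remark about $\tau_a\in L^\infty$---but this is entirely consistent with, and a reasonable elaboration of, what the paper intends.
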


To apply the results of the current paper to this same conformal data set, 
we first recall that Proposition \ref{prop:curve} applies equally well to $L^\infty$
curvatures. It is then easy to verify that, without changing the proofs, we also obtain Propositions
\ref{prop:noncmc} and \ref{prop:class} for $L^\infty$
curvatures. Hence we obtain the following.
\begin{proposition}\label{prop:overlap2}
Consider a conformal data set of the form \eqref{eq:overlapdata} and let $a^*$
be defined by \eqref{eq:overlapastar}.
\begin{itemize}
\item If $a-a^*$ and $\mu$ are non-zero and have the same sign, there is at least one
solution (and it generates either a flat Kasner or a static toroidal spacetime as 
described by Proposition \ref{prop:class}).
\item If $a=a^*$ and $\mu=0$, then there is a one parameter family of
solutions (and they generates either a family of flat Kasner or static toroidal spacetimes as 
described by Proposition \ref{prop:class}).
\end{itemize}
\end{proposition}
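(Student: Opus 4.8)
The plan is to read Proposition \ref{prop:overlap2} as the specialization of Propositions \ref{prop:noncmc} and \ref{prop:class} to the mean curvature $\tau_a = a + \xi_{\rm jump}$, after disposing of two bookkeeping points: translating the parameter $a$ used here into the quantity $\tau^*$ used there, and accommodating the fact that $\tau_a$ is merely $L^\infty$.

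First I would compute $\tau^*$ for the data \eqref{eq:overlapdata}. Since $N$ and $\tau_a$ depend on $s^1$ only, the definition \eqref{eq:tau_0} gives
\begin{equation}
\tau^* = \frac{\int_{S^1} N(a+\xi_{\rm jump})\, ds}{\int_{S^1} N\, ds}
 = a + \frac{\int_{S^1} N\xi_{\rm jump}\, ds}{\int_{S^1} N\, ds}
 = a - a^*
\end{equation}
by \eqref{eq:overlapastar}. Hence ``$a-a^*$ and $\mu$ nonzero of the same sign'' is exactly the hypothesis ``$\tau^*$ and $\mu$ nonzero of the same sign'' of the first part of Proposition \ref{prop:noncmc}, and ``$a=a^*$ and $\mu=0$'' is exactly ``$\tau^*=0=\mu$'' of its second part. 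Likewise $\tau^\circ = \int_{S^1}\tau_a\, ds = a$ since $\int_{S^1}\xi_{\rm jump}\, ds = 0$, so the flat-Kasner-versus-static-toroidal dichotomy of Proposition \ref{prop:class} reduces to whether $a$ vanishes.

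Next I would promote Propositions \ref{prop:noncmc} and \ref{prop:class} to $L^\infty$ curvatures. As already remarked for Proposition \ref{prop:curve}, its curve construction survives for $\tau\in L^\infty(S^1)$, with $\gamma_1,\gamma_3\in C^{1,1}$ and curvature read off weakly from \eqref{eq:coordcurvature}. The derivation preceding Proposition \ref{prop:noncmc} uses only the algebraic identity \eqref{eq:d-mu-tau} (valid pointwise almost everywhere), the integration that fixes the constant in the reduced momentum constraint, and the pointwise computation of $\ol K$, none of which is affected by weakening smoothness of $\tau$, $\xi$, and $w'/N$ to boundedness. One then checks the regularity of the constructed solution: $\phi\equiv c$ is constant, $w'/N = c^q\xi\in L^\infty$ so that $W\in W^{1,\infty}$, and \eqref{eq:CTS-H-red-momentum} holds distributionally since both sides equal $c^q\xi'$; the classification in Proposition \ref{prop:class} invokes only the $L^\infty$ form of Proposition \ref{prop:curve}. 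With these $L^\infty$ versions in hand, the two bullets follow at once: when $a-a^*$ (hence $\tau^*$) and $\mu$ are nonzero of the same sign, Proposition \ref{prop:noncmc} furnishes a solution, which Proposition \ref{prop:class} identifies as a simple product slice of a flat Kasner spacetime when $a\neq 0$ and of a static toroidal spacetime when $a=0$; and when $a=a^*$, $\mu=0$, the second part of Proposition \ref{prop:noncmc} furnishes the one-parameter family $\phi\equiv c$, $c>0$, again classified member by member.

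The only step with any content is this $L^\infty$ upgrade --- confirming that nothing in the smooth proofs secretly used more than boundedness of $\tau$, in particular that \eqref{eq:CTS-H-red-momentum} is satisfied in the correct weak sense and that $(\phi,W)$ is a genuine solution of the CTS-H system in the regularity class of Section \ref{sec:prelims}. In view of the remark following Proposition \ref{prop:curve}, I expect this to be routine, so I anticipate no real obstacle.
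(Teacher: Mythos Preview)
Your proposal is correct and follows essentially the same approach as the paper: the paper simply observes (just before stating the proposition) that Proposition \ref{prop:curve} extends to $L^\infty$ curvatures and that Propositions \ref{prop:noncmc} and \ref{prop:class} then go through unchanged, whence the result. Your computation $\tau^* = a - a^*$ and $\tau^\circ = a$ makes explicit the translation between parameters that the paper leaves implicit, but otherwise the argument is the same.
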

The existence results of Propositions \ref{prop:overlap1} and \ref{prop:overlap2} 
have nontrivial intersection, but each nontrivially extends the other.  
Proposition \ref{prop:overlap2} does not consider the case where $a-a^*$ and $\mu$
have opposite signs, whereas  Proposition \ref{prop:overlap2} does.  On the other hand,
when $a-a^*$ and $\mu$ have the same nonzero sign, Proposition \ref{prop:overlap2}
provides a solution, whereas Proposition \ref{prop:overlap1} requires the
additional hypothesis that $\mu$ is small if $|a|<1$.
The region where no existence result is obtained by either paper is restricted to
the following regions:
\begin{itemize}
\item $\mu=0$ or $a=a^*$, but not both,
\item $a-a^*$ and $\mu$ have opposite signs, and either
$|a|< |a^*|$, or  $|a^*|<|a|<1$, and $|\mu|$ is large.
\end{itemize}
It would be reasonable to conjecture that in the first case there is no solution, whereas
in the second case there is a solution, in which case we would obtain a complete analogue
of Proposition \ref{prop:cmcflat} in this setting.

Perhaps most interestingly, we find that in the regions of parameter space 
where both theorems predict existence, the resulting spacetimes are smooth, 
even though the mean curvatures are only $L^\infty$.  Moreover, we
find that the one-parameter family discovered in Proposition \ref{prop:overlap1} is not an artifact
of low regularity, but is simply a special case of the more general family found in
Proposition \ref{prop:noncmc}.

\section{Conformal Transformations of the Mean Curvature}\label{sec:mc-alt}

Although we have given a full description of the conformal construction of simple product 
slices of flat Kasner spacetimes, the quantities $\tau^*$ and $\tau^\circ$ that play such
a key role in the description are difficult to detect from arbitrary conformal data sets.  

Consider the definition
\begin{equation}
\tau^*  = \frac{\int_{S^1} N \tau \; ds}{\int_{S^1} N\; ds}.
\end{equation}
Since $s$ is the unit coordinate on $S^1$, this definition
depends on a specific choice of coordinates.  This situation
can be rectified
somewhat by recalling that the full
conformal data set is $(g_\Bell, \mu\sigma^\flat_\Bell, \tau, N)$  
and therefore 
\begin{equation}
\int_{T^n} N\tau \; dV_\Bell = (\ell_1\cdots\ell_n) \int_{S^1} N\tau\; ds
\end{equation}
and similarly 
\begin{equation}
\int_{T^n} N \; dV_\Bell = (\ell_1\cdots\ell_n)  \int_{S^1} N\; ds.
\end{equation}
Hence
\begin{equation}\label{eq:taustartinv}
\tau^* = \frac{\int_{T^n} N \tau\; dV_\Bell}{\int_{T^n} N\; dV_\Bell},
\end{equation}
which is an expression that does not depend on coordinates.  However,
equation \eqref{eq:taustartinv} still suffers from the fact that it
lacks conformal covariance.  Let $\psi$ be a positive conformal factor
and let $\tilde g = \psi^{q-2}g_\Bell$, 
$\tilde \sigma = \psi^{-2}\mu\sigma^\flat_\Bell$ and $\tilde N = \psi^q N$.
The conformal data set $(\tilde g, \tilde \sigma, \tau, \tilde N)$
generates the same solutions of the constraint equations as 
$(g_\Bell, \mu\sigma^\flat_\Bell, \tau, N)$, and we would like to know if $\tau^*=0$
or not to determine if we will construct a one-parameter family.  Using
the fact that $\widetilde {dV} = \psi^q dV_\Bell$ equation \eqref{eq:taustartinv}
becomes
\begin{equation}\label{eq:taustartinv2}
\tau^* = \frac{\int_{T^n} \psi^{-2q} \tilde N \tau\; \widetilde{dV}}
{\int_{T^n} \psi^{-2q} \tilde N\; \widetilde{dV}}.
\end{equation}
A similar exercise shows that
\begin{equation}
\tau^\circ = \frac{\int_{T^n} \psi^{-q} \tau\; \widetilde{dV}}
{\int_{T^n} \psi^{-q} \; \widetilde{dV}}.
\end{equation}
It seems impossible to write $\tau^*$ and $\tau^\circ$ in terms of the data with tildes
without explicit reference to the conformal factor $\psi$.  Since the conformal factor solving the CTS-H equations
for the data with tildes is $\psi^{-1} c$ for some constant $c$, it appears
that computing $\tau^*$ and $\tau^\circ$  is tantamount to solving the CTS-H equations
in the first place.  We cannot determine if we will generate a one-parameter of solutions
for the data until we have effectively already solved the CTS-H equations.

On the other hand, for a conformal data set
$(g_\Bell, \mu\sigma^\flat_\Bell, \tau, N)$ leading to a flat spacetime, the
solution metric is $\ol g = c^{q-2} g_\Bell$ for some constant $c>0$
and the associated lapse is $\ol N = c^q N$. We
can therefore also write
\begin{equation}
\tau^* = \frac{\int_{T^n} c^{-2q} \ol N \tau\; \ol{dV}}
{\int_{T^n} c^{-2q} \ol N\; \ol{dV}} = 
\frac{\int_{T^n}  \ol N \tau\; \ol{dV}}
{\int_{T^n}  \ol N\; \ol{dV}}
\end{equation}
and similarly
\begin{equation}\label{eq:taustartinv3}
\tau^\circ = \frac{\int_{T^n} \tau\; \ol{dV}}
{\int_{T^n} \; \ol{dV}}
\end{equation}
where $\ol{dV}$ is the volume element of the solution metric.
Hence $\tau^*$ and $\tau^\circ$ are naturally computed in terms of the solution
variables $\ol g$ and $\ol N$.

We are therefore lead to consider alternative treatments of the mean curvature that allow
one to specify either $\tau^*$ or $\tau^\circ$ explicitly.  We focus our attention here on $\tau^*$
since the property it controls (whether we construct just one solution or a one-parameter
family of solutions) seems more essential than the property that $\tau^\circ$ controls (whether the spacetime
is an expanding or contracting Kasner).  Nevertheless, the approaches suggested below have obvious parallels
that allow one to specify $\tau^\circ$ as part of the conformal data set (at the expense of leaving $\tau^*$
unknown).

\subsection{Lapse-scaled mean curvature}
Starting with a traditional conformal data set $(g,\sigma, \tau, N)$ on a compact manifold $M^n$,
we decompose $\tau = \tau^* + \xi$ where
$\tau^*$ is a constant and $\int_M \xi N \; dV_g = 0$ to arrive at
a data set $(g,\sigma,\tau^*,\xi, N)$.  Given a conformal factor $\psi$ we now define 
a conformally related data set
\begin{equation}
(\tilde g, \tilde \sigma, \tau^*, \tilde \xi, \tilde N) = (\psi^{q-2} g, \psi^{-2}\sigma, \tau^*, \psi^{-2q}\xi, \psi^q N)
\end{equation}
and let the conformally related mean curvature be $\tilde \tau = \tau^* + \tilde\xi = \tau^* + \psi^{-2q}\xi$.
With this transformation law, using the fact that $\int_M N\xi\; dV = 0$, we find
\begin{equation}
\begin{aligned}
\frac{\int_M \tilde N \tilde \tau\; \widetilde{dV}} {\int_M \tilde N\;  \widetilde{dV}}
&= \frac{\int_{M} \psi^q N (\tau^* + \psi^{-2q} \xi) \psi^q\; dV}{\int_{M} \psi^q N \psi^q\; dV}\\
&=\frac{\int_{M} (\psi^{2q} N \tau^* +  N\xi)\; dV}{\int_{M} \psi^{2q} N\; dV}\\
&= \tau^*
\end{aligned}
\end{equation}
regardless of the  conformal factor $\psi$. So we are introducing a non-trivial conformal 
transformation law for $\tau$ that preserves the quantity $\tau^*$, and we note that it reduces
to the traditional law of unchanging mean curvature in the CMC case.  

Starting from a data set $(g,\sigma,\tau^*,\xi, N)$ we seek a solution of the constraints
\begin{equation}
\begin{aligned}
\ol g &= \phi^{q-2} g\\
\ol K &= \phi^{-2} \left( \sigma + \frac{1}{2N}\ck W\right) + \frac{\tau^* + \phi^{-2q}\xi}{n} \ol g.
\end{aligned}
\end{equation}
for some conformal factor $\phi$ and vector field $W$.  Substituting these equations into the constraint
equations we obtain
\ifjournal
\begin{equation}\label{eq:CTSHSTAR}
\begin{aligned}
-2\kappa q\Lap \phi + R\phi 
&= \left| \sigma + \frac{1}{2N} \ck W\right|^2\phi^{-q-1} - \kappa(\tau^*+\phi^{-2q}\xi)^2\phi^{q-1}\\
-\div\left(\frac{1}{2N}\ck W\right) &= \kappa \phi^q \;d (\phi^{-2q} \xi),
\end{aligned}
\end{equation}
\else
\begin{equation}\label{eq:CTSHSTAR}
\begin{aligned}
-2\kappa q\Lap \phi + R\phi - \left| \sigma + \frac{1}{2N} \ck W\right|^2\phi^{-q-1} + \kappa(\tau^*+\phi^{-2q}\xi)^2\phi^{q-1}
&= 0\\
-\div\left(\frac{1}{2N}\ck W\right) &= \kappa \phi^q \;d (\phi^{-2q} \xi),
\end{aligned}
\end{equation}
\fi
which we will call the \textbf{CTS-H}$^*$ equations.

The lapse-scaled mean curvature is a significant modification of the CTS-H equations, 
and we delay a general analysis of its properties for future work.  For now, we restrict our attention to
conformal data sets of the type considered in the non-CMC construction
of Proposition \ref{prop:noncmc}.  Consider a conformal data set
\begin{equation}
(g_\Bell, \mu \sigma^\flat_\Bell, \tau^*, \xi, N)
\end{equation}
on the torus $T^n$ where $\xi$ and $N$ are functions of $s^1$ alone.  
Writing $s=s^1$  we 
seek a solution of system \eqref{eq:CTSHSTAR} of the form $\phi=\phi(s)$
and $W = w(s) \partial_s$. The CTS-H$^*$ equations can then be rewritten
\begin{subequations}\label{eq:CTS-H-STAR-reduced}
\begin{alignat}{2}
-2\kappa q \ell^2\, \phi'' - \kappa(\mu + (w'/N))^2\phi^{-q-1} + \kappa (\tau^*+\phi^{-2q}\xi)^2 \phi^{q-1} &= 0
\label{eq:CTS-H-STAR-red-ham} \\
(w'/N)' &= \phi^q (\phi^{-2q}\xi)'. \label{eq:CTS-H-STAR-red-momentum}
\end{alignat}
\end{subequations}
where $\ell=\Bell_1$ and derivatives are taken with respect to $s$. 
An analysis identical to the one following equations \eqref{eq:CTS-H-reduced}
shows that if $\mu$ and $\tau^*$ are nonzero and have the same sign, then there is
a solution $(\phi,w)$ of system \eqref{eq:CTS-H-STAR-reduced} satisfying
\begin{equation}
\begin{aligned}
\phi \equiv c &= \left(\frac{\mu}{\tau^*}\right)^\frac{1}{q}\\
\frac{w'}{N} &= c^{-q}\xi.
\end{aligned}
\end{equation}
If $\mu$ and $\tau^*$ both vanish then $\phi\equiv c$ and $w'/N = c^{-q}\xi$ is a solution
for every $c>0$.  A parallel computation to the one that led to Proposition \ref{prop:noncmc}
now leads to the following.

\begin{proposition}\label{prop:noncmc-star}
Consider a CTS-H$^*$ conformal data set $(g_\Bell, \mu \sigma^\flat_\Bell, \tau^*, \xi, N)$ on $T^n$
where $\mu$ is a constant, $\xi$ and $N$ are functions of the first factor of $T^n$ only,
and $\int_{T^n} \xi N\; dV_\Bell = 0$.

If $\mu$ and $\tau^*$ are nonzero and have the same sign, then
there is a solution $(\phi, W)$  of the CTS-H$^*$ equations 
\eqref{eq:CTSHSTAR} such that
\begin{equation}
\phi\equiv c = \left(\frac{\mu}{\tau^*}\right)^\frac{1}{q}.
\end{equation}
The vector field $W$ is parallel to $\partial_{s^1}$ and
\begin{equation}\label{eq:ckW2}
\frac{1}{2N} \ck W = c^{-q} \xi \sigma^\flat_\Bell.
\end{equation}
The solution of the constraint equations generated by this solution of the CTS-H$^*$ 
equations is
\begin{equation}\label{eq:nonCMCflat-star}
\begin{aligned}
\ol g &= g_{r\Bell} \\
\ol K &= \ol\tau\; (r\ell_1)^2 (ds^1)^2
\end{aligned}
\end{equation}
where 
\begin{equation}
\ol \tau = \tau^* + c^{-2q} \xi
\end{equation}
and where
\begin{equation}
r=c^{(q-2)/2}=\left(\frac{\mu}{\tau^*}\right)^{\frac{1}{n}}.
\end{equation}

If $\mu$ and $\tau^*$ are both equal to zero, then for every
$c>0$ there is a solution $(\phi, W)$ of the CTS-H$^*$ equations such that
$\phi\equiv c$ everywhere and such that equation
\eqref{eq:ckW2} holds.  The associated solution of the constraint equations is
given by equations \eqref{eq:nonCMCflat} with $r=c^{(q-2)/2}$.
\end{proposition}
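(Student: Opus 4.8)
The plan is to mirror the proof of Proposition~\ref{prop:noncmc} essentially line for line, with the reduced CTS-H$^*$ system \eqref{eq:CTS-H-STAR-reduced} in place of \eqref{eq:CTS-H-reduced}. The one genuinely content-bearing fact — that \eqref{eq:CTS-H-STAR-reduced} admits solutions with constant conformal factor — has already been recorded in the discussion preceding the statement: under the symmetric ansatz $\phi=\phi(s)$, $W=w(s)\partial_{s^1}$ with $s=s^1$, the pair $\phi\equiv c$, $w'/N=c^{-q}\xi$ solves \eqref{eq:CTS-H-STAR-reduced} for every $c>0$ when $\mu=\tau^*=0$, and for $c=(\mu/\tau^*)^{1/q}$ when $\mu$ and $\tau^*$ are nonzero and of the same sign. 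The point I would check while setting this up is that integrating \eqref{eq:CTS-H-STAR-red-momentum} with $\phi\equiv c$ produces $w'/N=c^{-q}\xi+C$ with $C=0$: indeed $\int_{S^1}w'\,ds=0$ by periodicity and $\int_{S^1}N\xi\,ds=0$ by the normalization $\int_{T^n}\xi N\,dV_\Bell=0$, so $C\int_{S^1}N\,ds=0$ and hence $C=0$. Since the $U^{n-1}$ reduction is exact, this yields an honest solution $(\phi,W)$ of the full CTS-H$^*$ equations \eqref{eq:CTSHSTAR} with $W$ parallel to $\partial_{s^1}$.

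I would then read off the auxiliary data exactly as in Section~\ref{sec:non-cmc}. From $\ck W=2w'\sigma^\flat_\Bell$ (the analogue of the third line of \eqref{eq:sundires}) one gets
\[
\frac{1}{2N}\ck W=\frac{w'}{N}\,\sigma^\flat_\Bell=c^{-q}\xi\,\sigma^\flat_\Bell ,
\]
which is \eqref{eq:ckW2}; and $\ol g=\phi^{q-2}g_\Bell=c^{q-2}g_\Bell=g_{r\Bell}$ with $r=c^{(q-2)/2}$, which equals $(\mu/\tau^*)^{1/n}$ since $q-2=2q/n$ gives $r=(c^q)^{1/n}$.

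For the second fundamental form, I would use $c^q\tau^*=\mu$ (the defining relation for $c$) to write
\[
\sigma+\frac{1}{2N}\ck W=(\mu+c^{-q}\xi)\,\sigma^\flat_\Bell=(c^q\tau^*+c^{-q}\xi)\,\sigma^\flat_\Bell=c^q\bigl(\tau^*+c^{-2q}\xi\bigr)\,\sigma^\flat_\Bell=c^q\ol\tau\,\sigma^\flat_\Bell ,
\]
where $\ol\tau=\tau^*+c^{-2q}\xi=\tau^*+\phi^{-2q}\xi$ is precisely the conformally transformed mean curvature appearing in the CTS-H$^*$ representation of $\ol K$. Hence
\[
\ol K=\phi^{-2}\Bigl(\sigma+\tfrac{1}{2N}\ck W\Bigr)+\frac{\ol\tau}{n}\,\ol g=c^{-2}c^q\ol\tau\,\sigma^\flat_\Bell+\frac{\ol\tau}{n}c^{q-2}g_\Bell=c^{q-2}\ol\tau\Bigl(\sigma^\flat_\Bell+\tfrac1n g_\Bell\Bigr),
\]
and since $\sigma^\flat_\Bell+\tfrac1n g_\Bell=\ell_1^2(ds^1)^2$ (the identity already used in the proof of Proposition~\ref{prop:noncmc}, the offset exponents $\mathbf b^\flat$ being chosen so that $\mathbf b^\flat+(1/n,\dots,1/n)=(1,0,\dots,0)$) we obtain $\ol K=\ol\tau\,(r\ell_1)^2(ds^1)^2$, which is \eqref{eq:nonCMCflat-star}. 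The case $\mu=\tau^*=0$ is handled identically, now allowing arbitrary $c>0$ and with $\ol\tau=c^{-2q}\xi$.

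There is no real obstacle here; the only place requiring a bit of attention is the bookkeeping of conformal weights. Because the CTS-H$^*$ momentum constraint carries $\phi^{-2q}\xi$ rather than $\xi$, integrating it at $\phi\equiv c$ produces $w'/N=c^{-q}\xi$ rather than the $c^q\xi$ of the plain CTS-H case, and correspondingly the mean curvature of the solution is $\tau^*+c^{-2q}\xi$ rather than being equal to $\tau$. Once these exponents are tracked correctly, the verification that $\ol g$ and $\ol K$ have the stated form is routine.
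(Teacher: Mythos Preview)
Your proposal is correct and follows exactly the approach the paper takes: the paper's ``proof'' consists of the remark that an analysis identical to the one following equations \eqref{eq:CTS-H-reduced} yields $\phi\equiv c$, $w'/N=c^{-q}\xi$, and that ``a parallel computation to the one that led to Proposition~\ref{prop:noncmc}'' gives the stated $(\ol g,\ol K)$. You have simply written out that parallel computation in full, with the correct bookkeeping of the modified conformal weights.
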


We also have an analogue of Proposition \ref{prop:class} that shows
that the solutions of the constraint equations generated by
Proposition \ref{prop:noncmc-star} are simple product slices
of flat Kasner and static toroidal spacetimes.

\begin{proposition}\label{prop:class-star}
Suppose $(\ol g, \ol K)$ is a solution of the constraints
generated by Proposition \ref{prop:noncmc-star} from CTS-H$^*$ data
\begin{equation}
(g_\Bell, \mu \sigma^\flat_\Bell, \tau^*, \xi, N)
\end{equation}
and let
\begin{equation}\label{eq:tau_circ_star}
\tau^\circ = \frac{\int_{T^n}\ol \tau\; dV_{\ol g}}{\int_{T^n}1\; dV_{\ol g}}
\end{equation}
where $\ol \tau$ is defined in Proposition \ref{prop:noncmc-star}.

If $\tau^\circ\neq 0$, then $\ol g$ and $\ol K$
are induced by a simple product embedding into the flat Kasner spacetime
\begin{equation}
\mathcal K_{\Psi} \times T^{n-1}_{\hat \Bell}
\end{equation}
where 
\begin{equation}
\Psi = (r\ell_1)\tau^\circ,
\end{equation}
\begin{equation}
\hat\Bell = (r\ell^2,\ldots,r\ell^n),
\end{equation}
and where $r=c^{(q-2)/2}$. If $\tau^\circ = 0$ then 
the metric
and second fundamental form from equations \eqref{eq:nonCMCflat}
are induced by a simple product embedding into a static toroidal spacetime
\begin{equation}
\mathcal C_{L}  \times T^{n-1}_{\hat \Bell}
\end{equation}
for some $L>0$.
\end{proposition}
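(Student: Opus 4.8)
The plan is to reduce this statement to Proposition \ref{prop:class} essentially verbatim, since the solution $(\ol g,\ol K)$ produced by Proposition \ref{prop:noncmc-star} has exactly the structural form of the solution \eqref{eq:nonCMCflat} treated there, with the role of the mean curvature $\tau$ now played by the conformally transformed curvature $\ol\tau = \tau^* + c^{-2q}\xi$.

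First I would record two elementary observations. One: $\ol\tau$ is a function of $s^1$ alone, because $\tau^*$ is constant and $\xi$ depends only on $s^1$. Two: the definition \eqref{eq:tau_circ_star} of $\tau^\circ$ simplifies to an $S^1$-average. Indeed, $\ol g = g_{r\Bell}$ is a product metric, so its volume element factors as $dV_{\ol g} = (r^n\ell_1\cdots\ell_n)\,ds^1\cdots ds^n$, and since $\ol\tau = \ol\tau(s^1)$ the integrations over the remaining $T^{n-1}$ factor cancel between numerator and denominator, leaving
\begin{equation*}
\tau^\circ = \frac{\int_{S^1}\ol\tau\;ds^1}{\int_{S^1}1\;ds^1} = \int_{S^1}\ol\tau\;ds^1,
\end{equation*}
which is precisely the quantity appearing in Proposition \ref{prop:curve}.

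Next I would apply Proposition \ref{prop:curve} to the curvature function $\ol\tau$ on $S^1$ with prescribed length $\ell = r\ell_1$ (the extension of that proposition to $L^\infty$ curvatures noted after its proof covers the jump case of Section \ref{sec:jump}). When $\tau^\circ\neq 0$ this produces an embedding $\gamma\colon S^1\ra\calK_\Psi$ with $\Psi = (r\ell_1)\tau^\circ$ whose induced metric is $(r\ell_1)^2(ds)^2$ and whose second fundamental form is $\ol\tau\,(r\ell_1)^2(ds)^2$; when $\tau^\circ = 0$ it produces instead some $L>0$ and an embedding $\gamma\colon S^1\ra\calC_L$ with the same induced first and second fundamental forms. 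I would then form the simple product embedding
\begin{equation*}
\iota(s^1,\ldots,s^n) = \gamma(s^1)\times(s^2,\ldots,s^n)
\end{equation*}
into $\calK_\Psi\times T^{n-1}_{\hat\Bell}$ (respectively $\calC_L\times T^{n-1}_{\hat\Bell}$) with $\hat\Bell = (r\ell_2,\ldots,r\ell_n)$, and verify exactly as in equations \eqref{eq:myg}--\eqref{eq:myK} that it induces the metric $g_{r\Bell}$ and the second fundamental form $\ol\tau\,(r\ell_1)^2(ds^1)^2$. Comparing with \eqref{eq:nonCMCflat-star} finishes the proof.

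I do not expect a genuine obstacle: the argument is a bookkeeping check that the CTS-H$^*$ solution has the form required to invoke the curve construction of Proposition \ref{prop:curve}, together with the one mildly new point that the quantity $\tau^\circ$ --- defined in \eqref{eq:tau_circ_star} with respect to the solution metric --- collapses to the $S^1$-average of $\ol\tau$. The only care needed is to keep straight that it is the conformally transformed mean curvature $\ol\tau$, and not the separate data $\tau^*$ or $\xi$, that serves as the curvature of the embedded circle.
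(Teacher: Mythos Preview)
Your proposal is correct and follows the same approach as the paper: reduce to Proposition~\ref{prop:class} once one verifies that the quantity $\tau^\circ$ defined by \eqref{eq:tau_circ_star} collapses to $\int_{S^1}\ol\tau\,ds$. The paper's own proof is even terser, simply pointing to the earlier discussion leading to equation~\eqref{eq:taustartinv3} for this reduction, whereas you spell out the volume-element factorization explicitly.
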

\begin{proof}
The proof is identical to that of Proposition \ref{prop:class}
so long as we can show that $\tau^\circ$ defined by
equation \eqref{eq:tau_circ_star} is equal
to
\begin{equation}
\int_{S^1} \ol \tau\; ds
\end{equation}
where $s$ is the unit coordinate on the circle.  But this was
shown previously in the discussion leading to equation 
\eqref{eq:taustartinv3}.
\end{proof}

The moral of Propositions \ref{prop:noncmc-star} and \ref{prop:class-star} is
that, as far as the parameterization of flat Kasner spacetimes is concerned, the 
conformal method using a lapse-scaled
mean curvature
has essentially the same properties as the standard conformal method, except that we can predict
in advance when we will generate a one-parameter family of solutions.  These families occur when
$\mu=0$ and $\tau^*=0$, regardless of which background metric we use to represent the
conformal data.  Once a flat spacetime has been constructed, we can determine 
if it is an expanding or contracting Kasner solution via Proposition \ref{prop:class-star},
but we do not get to choose this property of the solution in advance.

\subsection{Drift-parameterized mean curvature}

In developing the lapse-scaled mean curvature we worked with a piece $\xi$ of the mean curvature that conformally 
transforms as $\tilde \xi = \phi^{-2q}\xi$ 
so that $\int_M \tilde N \tilde \xi\; \widetilde{dV}=0$.  As an alternative parameterization, we specify
a vector field $Z$ and write
\begin{equation}
\xi = \frac{1}{N} \div_g Z
\end{equation}
and
\begin{equation}
\tilde \xi = \frac{1}{\tilde N} \div_{\tilde g} Z = \phi^{-2q} \frac{1}{N} \div_{g} (\phi^q Z).
\end{equation}
This provides a different method of conformally transforming $\xi$ so that
\ifjournal
\goodbreak
$\int_M \tilde N \tilde \xi\; \widetilde{dV}=0$ 
\else
$\int_M \tilde N \tilde \xi\; \widetilde{dV}=0$ 
\fi
continues to hold.  For reasons we will return to and better 
motivate in future work, we call the vector field $Z$ a drift.

With drift-parameterized mean curvature, a conformal data set on a compact manifold $M^n$ is 
$(g,\sigma, \tau^*, Z, N)$, where $Z$ is an arbitrary vector field.  Given 
a conformal factor $\psi$, we conformally transform this data via
\begin{equation}
(\tilde g,\tilde \sigma, \tilde \tau^*, \tilde Z, \tilde N) = 
(\psi^{q-2} g,\psi^{-2} \sigma, \tau^*, Z, \psi^{q} \tilde N).
\end{equation}
In particular, $\tau^*$ and $Z$ are invariant under conformal transformations,
and we construct from them a conformally transformed mean curvature 
\begin{equation}
\tilde \tau = \tau^* + (1/\tilde N) \div_{\tilde g} Z.
\end{equation}
Using these transformation laws we seek a solution
of the constraint equations of the form
\begin{equation}
\begin{aligned}
\ol g &= \phi^{q-2} g\\
\ol K &= \phi^{-2} \left( \sigma + \frac{1}{2N}\ck_g W\right) + \frac{\tau^* + (\phi^{-2q}/ N)\div_{g} Z}{n} \ol g,
\end{aligned}
\end{equation}
and the constraint equations become
\ifjournal
\begin{equation}\label{eq:CTSH-DIV}
\begin{aligned}
-2\kappa q\Lap \phi + R\phi &=  \left| \sigma + \frac{1}{2N} \ck W\right|^2\phi^{-q-1} - \kappa(\tau^*+\phi^{-2q}\div(\phi^q Z))^2\phi^{q-1}
\\
-\div\left(\frac{1}{2N}\ck W\right) &= \kappa \phi^q \nabla \left( \frac{\phi^{-2q}}{N} \div(\phi^q Z) \right),
\end{aligned}
\end{equation}
\else
\begin{equation}\label{eq:CTSH-DIV}
\begin{aligned}
-2\kappa q\Lap \phi + R\phi - \left| \sigma + \frac{1}{2N} \ck W\right|^2\phi^{-q-1} + \kappa(\tau^*+\phi^{-2q}\div(\phi^q Z))^2\phi^{q-1}
&= 0\\
-\div\left(\frac{1}{2N}\ck W\right) &= \kappa \phi^q \nabla \left( \frac{\phi^{-2q}}{N} \div(\phi^q Z) \right),
\end{aligned}
\end{equation}
\fi
which we will call the \textbf{CTS-H}$^{**}$ equations.  The analytical difficulties
of the CTS-H$^{**}$ equations seem more substantial than those of the CTS-H$^*$ equations
because both the Hamiltonian and momentum constraints are second-order in $\psi$. 
However, we show here that as far as flat Kasner spacetimes are concerned,
the drift-parameterized mean curvature has properties that are essentially identical
to that of the lapse-scaled mean curvature of the CTS-H$^*$ equations.

Consider a CTS-H$^{**}$ conformal data set
\begin{equation}
(g_\Bell, \mu \sigma^\flat_\Bell, \tau^*, Z, N)
\end{equation}
on the torus $T^n$ where $\xi$ and $N$ are functions of $s=s^1$ alone,
and $Z = z \partial_{s}$ for some function $z(s)$.  As before,
we seek a solution of system \eqref{eq:CTSH-DIV} the form $\phi=\phi(s)$
and $W = w(s) \partial_s$. The CTS-H$^{**}$ equations become
\ifjournal
\begin{subequations}\label{eq:CTS-H-div-reduced}
\begin{alignat}{2}
-2\kappa q \ell^2\, \phi''  &= \kappa(\mu + (w'/N))^2\phi^{-q-1} - \kappa \left(\tau^*+\frac{\phi^{-2q}}{N}(\phi^q z)' \right)^2 \phi^{q-1}
\label{eq:CTS-H-div-red-ham} \\
(w'/N)' &= \phi^q \left(\frac{\phi^{-2q}}{N} (\phi^q z)'\right)'
. \label{eq:CTS-H-div-red-momentum}
\end{alignat}
\end{subequations}
\else
\begin{subequations}\label{eq:CTS-H-div-reduced}
\begin{alignat}{2}
-2\kappa q \ell^2\, \phi'' - \kappa(\mu + (w'/N))^2\phi^{-q-1} + \kappa \left(\tau^*+\frac{\phi^{-2q}}{N}(\phi^q z)' \right)^2 \phi^{q-1} &= 0
\label{eq:CTS-H-div-red-ham} \\
(w'/N)' &= \phi^q \left(\frac{\phi^{-2q}}{N} (\phi^q z)'\right)'
. \label{eq:CTS-H-div-red-momentum}
\end{alignat}
\end{subequations}
\fi
where $\ell=\Bell_1$ and derivatives are taken with respect to $s$. 

One readily verifies that 
if $\mu$ and $\tau^*$ are nonzero and have the same sign, then
\begin{equation}
\begin{aligned}
\phi \equiv c &= \left(\frac{\mu}{\tau^*}\right)^\frac{1}{q}\\
w &= z
\end{aligned}
\end{equation}
is a solution of equations \eqref{eq:CTS-H-div-reduced}, and that
if $\mu$ and $\tau^*$ both vanish then $\phi\equiv c$ and $w=z$ is a solution
for every $c>0$.  We then have the following straightforward variation of
Propositions \ref{prop:noncmc-star}, and we omit the proof.

\begin{proposition}\label{prop:noncmc-div}
Consider CTS-H$^{**}$ conformal data $(g_\Bell, \mu \sigma^\flat_\Bell, \tau^*, Z, N)$ on $T^n$
where $\mu$ is a constant, $N$ is a function of the first factor of $T^n$ only,
and $Z$ is a vector field of the form $Z=z(s^1)\partial_{s^1}$.

If $\mu$ and $\tau^*$ are nonzero and have the same sign, then
\begin{equation}
\begin{aligned}
\phi\equiv c &= \left(\frac{\mu}{\tau^*}\right)^\frac{1}{q}\\
W &= Z
\end{aligned}
\end{equation}
is a solution of the CTS-H$^{**}$ equations.
The solution of the constraint equations generated by this solution of the CTS-H$^{**}$
equations is
\begin{equation}
\begin{aligned}
\ol g &= g_{r\Bell} \\
\ol K &= \ol\tau\; (r\ell_1)^2 (ds^1)^2
\end{aligned}
\end{equation}
where 
\begin{equation}
\ol \tau = \tau^* + c^{-q} \frac{1}{N} \div_{g} Z
\end{equation}
and where
\begin{equation}
r=c^{(q-2)/2}=\left(\frac{\mu}{\tau^*}\right)^{\frac{1}{n}}.
\end{equation}

If $\mu$ and $\tau^*$ are both equal to zero, then for every
$c>0$ there is a solution $(\phi, W)$ of the CTS-H$^*$ equations such that
$\phi\equiv c$ everywhere and such that equation
\eqref{eq:ckW} holds.  The associated solution of the constraint equations is
given by equations \eqref{eq:nonCMCflat} with $r=c^{(q-2)/2}$.
\end{proposition}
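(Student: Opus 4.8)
The plan is to prove Proposition~\ref{prop:noncmc-div} by direct verification, in exactly the spirit of the proofs of Propositions~\ref{prop:noncmc} and~\ref{prop:noncmc-star}: one checks that the explicit pair $(\phi,W)=(c,Z)$ named in the statement solves the reduced CTS-H$^{**}$ system~\eqref{eq:CTS-H-div-reduced}, and then substitutes it into the ansatz for $(\ol g,\ol K)$ to identify the generated solution of the constraint equations. Because the data carries a $U^{n-1}$ symmetry and we only assert existence, there is no PDE to solve --- the entire content is an algebraic identity together with some bookkeeping of conformal weights.

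First I would check the momentum constraint~\eqref{eq:CTS-H-div-red-momentum}. With $\phi\equiv c$ constant one has $(\phi^q z)'=c^q z'$ and hence $\frac{\phi^{-2q}}{N}(\phi^q z)'=c^{-q}\,z'/N$, so the right-hand side collapses to $c^q\bigl(c^{-q}z'/N\bigr)'=(z'/N)'$, which is precisely the left-hand side $(w'/N)'$ when $w=z$; the identity holds for every $c>0$. (As in the proof of Proposition~\ref{prop:noncmc}, integrating~\eqref{eq:CTS-H-div-red-momentum} over $S^1$ and using $N>0$ forces the constant of integration to vanish, which is why $w=z$ is admissible; in any case only $w'$ enters.) Next I would check the Hamiltonian constraint~\eqref{eq:CTS-H-div-red-ham}: the $\phi''$ term drops out, and inserting $w'/N=z'/N$ and $\frac{\phi^{-2q}}{N}(\phi^q z)'=c^{-q}z'/N$ reduces the equation, after dividing through by $\kappa c^{q-1}$, to
\[
\bigl(c^{-q}\mu+c^{-q}z'/N\bigr)^2=\bigl(\tau^*+c^{-q}z'/N\bigr)^2 .
\]
By the same elementary argument used after~\eqref{eq:d-mu-tau} --- the function $z'/N$ is constant only if it vanishes, since $\int_{S^1}z'\,ds=0$ and $\int_{S^1}N\,ds>0$ --- this identity holds iff $c^{-q}\mu=\tau^*$. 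When $\mu$ and $\tau^*$ are nonzero of the same sign this pins down $c=(\mu/\tau^*)^{1/q}>0$; when $\mu=\tau^*=0$ it holds for every $c>0$.

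Finally I would read off the generated data. The metric is $\ol g=\phi^{q-2}g_\Bell=c^{q-2}g_\Bell=g_{r\Bell}$ with $r=c^{(q-2)/2}$, and since $(q-2)/(2q)=1/n$ this gives $r=(\mu/\tau^*)^{1/n}$. For the second fundamental form, $\frac{1}{2N}\ck W=\frac{z'}{N}\sigma^\flat_\Bell$, so $\mu\sigma^\flat_\Bell+\frac{1}{2N}\ck W=(\mu+z'/N)\sigma^\flat_\Bell=c^q\ol\tau\,\sigma^\flat_\Bell$, where $\ol\tau=\tau^*+c^{-q}\frac1N\div_g Z$ and we used $\mu=c^q\tau^*$; the trace term contributes $\frac{\ol\tau}{n}\ol g$. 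Combining these and using that $\sigma^\flat_\Bell+\frac1n g_\Bell=\ell_1^2(ds^1)^2$ (this being the tensor $\sum_k a_k\ell_k^2(ds^k)^2$ for the flat Kasner exponents $(1,0,\dots,0)$), one obtains $\ol K=c^{q-2}\ol\tau\,\ell_1^2(ds^1)^2=\ol\tau\,(r\ell_1)^2(ds^1)^2$, as claimed. The $\mu=\tau^*=0$ case is handled identically with $c>0$ arbitrary.

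I do not expect a genuine obstacle here: the proposition is a verification, and the only things requiring care are the bookkeeping of the powers of $c$ so that the factors $\phi^{-q-1}$, $\phi^{q-1}$, $\phi^q$, and $\phi^{-2q}$ in~\eqref{eq:CTS-H-div-reduced} combine correctly, and the observation that it is the \emph{conformally invariant} drift $Z$ --- rather than the background expression $\frac1N\div_g Z$ --- that is carried over unchanged into $W$, which is exactly what makes the momentum constraint balance for a constant conformal factor. If one also wanted the drift-parameterized analogue of Propositions~\ref{prop:class} and~\ref{prop:class-star}, identifying $(\ol g,\ol K)$ as a simple product slice of a flat Kasner spacetime (when $\int_{T^n}\ol\tau\,dV_{\ol g}\neq0$) or of a static toroidal spacetime (when that integral vanishes), one would invoke Proposition~\ref{prop:curve} verbatim, but that lies outside the statement being proved.
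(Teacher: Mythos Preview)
Your proposal is correct and is precisely the direct verification the paper has in mind: the paper itself omits the proof, noting only that one ``readily verifies'' that $\phi\equiv c$, $w=z$ solves the reduced system~\eqref{eq:CTS-H-div-reduced} and that the result is a ``straightforward variation'' of Proposition~\ref{prop:noncmc-star}. Your write-up supplies exactly that variation, including the bookkeeping for $\ol K$ via $\sigma^\flat_\Bell+\tfrac1n g_\Bell=\ell_1^2(ds^1)^2$, and the extra ``iff'' analysis (paralleling~\eqref{eq:d-mu-tau}) is more than the statement requires but harmless.
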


It is easy to see that Proposition \ref{prop:class-star} also holds for the 
the CTS-H$^{**}$ equations with the statement trivially modified to refer to 
Proposition \ref{prop:noncmc-div} rather than Proposition \ref{prop:noncmc-star}.
Thus drift-parameterized mean curvature and lapse-scaled mean curvature
lead to similar parameterizations of the flat Kasner spacetimes:
we can predict in advance when we will obtain one-parameter families
of solutions, but we detect \textit{a-posteriori} if a solution of
the constraints is an expanding or contracting Kasner spacetime.

\section{Conclusion}

Our study of slices of Kasner spacetimes provides
a model for understanding
of the conformal method on a Yamabe-null manifold 
as we transition from CMC conformal data to the far-from-CMC regime. 

Starting with CMC slices of Kasner spacetimes, the conformal method yields
an effective parameterization.  The
Kasner exponents of the spacetime are determined in a predictable
way from the conformal data, and one-parameter families of static toroidal spacetimes
occur only in the degenerate case $\sigma\equiv 0$ and $\tau_0=0$.  In particular,
the flat Kasner spacetimes develop from conformal data sets of the form
\begin{equation}
(g_\Bell, \mu \sigma^\flat_\Bell, \tau\equiv\tau_0, N)
\end{equation}
where the constants $\mu$ and $\tau_0$ are nonzero and have the same sign.

Section \ref{sec:non-cmc} extended the flat Kasner analysis
to include all simple product slices.  
Within this category we saw that there is essentially no restriction
on the mean curvature. Given a conformal data set
\begin{equation}\label{eq:concdata}
(g_\Bell, \mu \sigma^\flat_\Bell, \tau, N)
\end{equation}
where $\tau$ and $N$ are functions of one factor of the torus,
if
\begin{equation}
\tau^* = \frac{\int_{T^n} N \tau\; dV_\Bell}{\int_{T^n} N\; dV_\Bell}
\end{equation}
is nonzero, and if $\mu$ has the same sign as $\tau^*$, then we pick up a unique
flat Kasner or static toroidal spacetime.   Trouble can only occur under the non-generic 
condition $\tau^*=0$.

If $\tau$ is positive everywhere, then $\tau^*>0$ as well, and there are no difficulties:
if the sign of $\mu$ matches that of $\tau^*$, then we pick up a single 
solution in the family of interest.  A similar analysis applies to negative mean curvatures,
and we see that the good properties of the CMC parameterization extend into
a near-CMC region that includes, regardless of the choice of $N$, the mean curvatures 
with constant nonzero sign.

Suppose, however, that $\tau$ is a mean curvature that changes sign. 
For most choices of $N$, $\tau^*\neq 0$ as well, and again there are no difficulties.
But for this particular $\tau$ there will be some choices of $N$ for which $\tau^*=0$, 
and in these cases we do not construct a flat Kasner or static toroidal slice unless $\mu=0$
as well, in which case we build a one-parameter family. Moreover, the condition
$\tau^*=0$ cannot be verified directly in terms of an arbitrary conformal data set that is conformally 
related to a data set of the form \eqref{eq:concdata}, and this poses a real difficulty.
We know that one-parameter families can appear, but we seem to lack a conformally covariant
way of detecting them \textit{a-priori}.

The challenge just described, in combination with the multiplicity and
non-existence phenomena found in \cite{Maxwell:2011if}, starts to paint a picture
that conformal method may not be well suited to the far-from-CMC setting.
One might have hoped that the phenomena discovered in \cite{Maxwell:2011if} 
were a consequence of the $L^\infty$ mean curvatures.
But the one-parameter families found here occur
even for smooth data; they are a genuine feature of the conformal method and they
provide potential counterexamples for theorems one might wish to prove.  For example,
any uniqueness theorem will have to have hypotheses that somehow rule out the data
sets found here that construct one-parameter families.

The negative findings of this study are tempered by the fact that
the conformal data sets considered here are very special:
the manifold is conformally flat and has conformal Killing fields, the mean curvatures are
highly symmetric, and so forth. One might still hope to prove positive results for the 
conformal method under some generic conditions.  Additionally, all of the difficulties found
here and in \cite{Maxwell:2011if} occur only for mean curvatures that change sign. So
the conformal method may yet provide a good parameterization of 
those solutions of the constraint equations that have mean curvatures with constant sign.
But there is now a growing body of evidence that conformal method encounters genuine difficulties
for far-from-CMC conformal data sets, and it seems reasonable to explore alternative constructions
such as those proposed in Section \ref{sec:mc-alt}.

\section*{Acknowledgment}
This work was supported by NSF grant 0932078 000 while I was a resident at 
the Mathematical Sciences Research Institute in Berkeley, California,
and was additionally supported by NSF grant 1263544.

\bibliographystyle{amsalpha-abbrv}
\bibliography{Kasners,Kasners-manual}

\providecommand{\bysame}{\leavevmode\hbox to3em{\hrulefill}\thinspace}
\providecommand{\MR}{\relax\ifhmode\unskip\space\fi MR }
\providecommand{\MRhref}[2]{%
  \href{http://www.ams.org/mathscinet-getitem?mr=#1}{#2}
}
\providecommand{\href}[2]{#2}
\begin{thebibliography}{{\'O}MY74}

\bibitem[ACI08]{Allen:2008ef}
P.~T. Allen, A.~Clausen, and J.~Isenberg, \emph{{Near-constant mean curvature
  solutions of the Einstein constraint equations with non-negative Yamabe
  metrics}}, Classical and Quantum Gravity \textbf{25} (2008), no.~7,
  075009--075015.

\bibitem[FB52]{FouresBruhat:1952wg}
Y.~Foures-Bruhat, \emph{{Th{\'e}or{\`e}me d'existence pour certains
  syst{\`e}mes d'{\'e}quations aux d{\'e}riv{\'e}es partielles non
  lin{\'e}aires}}, Acta Mathematica \textbf{88} (1952), no.~1, 141--225.

\bibitem[HNT09]{Holst:2009ce}
M.~Holst, G.~Nagy, and G.~Tsogtgerel, \emph{{Rough solutions of the Einstein
  constraints on closed manifolds without near-CMC conditions}}, Communications
  in Mathematical Physics \textbf{288} (2009), no.~2, 547--613.

\bibitem[IM96]{Isenberg:1996fi}
J.~Isenberg and V.~Moncrief, \emph{{A set of nonconstant mean curvature
  solutions of the Einstein constraint equations on closed manifolds}},
  Classical and Quantum Gravity \textbf{13} (1996), no.~7, 1819--1847.

\bibitem[Is95]{Isenberg:1995bi}
J.~Isenberg, \emph{{Constant mean curvature solutions of the Einstein
  constraint equations on closed manifolds}}, Classical and Quantum Gravity
  \textbf{12} (1995), no.~9, 2249--2274.

\bibitem[Ka21]{Kasner:1921iy}
E.~Kasner, \emph{{Geometrical Theorems on Einstein's Cosmological Equations}},
  American Journal of Mathematics \textbf{43} (1921), no.~4, 217--221.

\bibitem[Li44]{Lichnerowicz:1944}
A.~Lichnerowicz, \emph{{L'int\'egration des \'equations de la gravitation
  relativiste et le probl\`eme des $n$ corps}}, Journal de Math\'ematiques
  Pures et Appliqu\'ees. Neuvi\`eme S\'erie \textbf{23} (1944), 37--63.

\bibitem[Ma09]{Maxwell:2009co}
D.~Maxwell, \emph{{A class of solutions of the vacuum Einstein constraint
  equations with freely specified mean curvature}}, Mathematical Research
  Letters \textbf{16} (2009), no.~4, 627--645.

\bibitem[Ma11]{Maxwell:2011if}
D.~Maxwell, \emph{{A model problem for conformal parameterizations of the
  Einstein constraint equations}}, Communications in Mathematical Physics
  \textbf{302} (2011), no.~3, 697--736.

\bibitem[Ma14]{Maxwell:2014a}
D.~Maxwell, \emph{The conformal method and the conformal thin-sandwich method
  are the same}, arXiv:1402.5585, 2014.

\bibitem[{\'O}MY74]{OMurchadha:1974bf}
N.~{\'O}~Murchadha and J.~W. York, \emph{{Initial-value problem of general
  relativity. I. General formulation and physical interpretation}}, Physical
  Review. D. Third Series \textbf{10} (1974), no.~2, 428--436.

\bibitem[PY03]{Pfeiffer:2003ka}
H.~P. Pfeiffer and J.~W. York, \emph{{Extrinsic curvature and the Einstein
  constraints}}, Physical Review. D. Third Series \textbf{67} (2003), no.~4,
  044022--044028.

\bibitem[Yo73]{York:1973fl}
J.~W. York, \emph{{Conformally invariant orthogonal decomposition of symmetric
  tensors on Riemannian manifolds and the initial-value problem of general
  relativity}}, Journal of Mathematical Physics \textbf{14} (1973), no.~4,
  456--464.

\bibitem[Yo99]{YorkJr:1999jo}
\bysame, \emph{{Conformal ``thin-sandwich'' data for the initial-value problem
  of general relativity}}, Physical Review Letters \textbf{82} (1999), no.~7,
  1350--1353.

\end{thebibliography}

\end{document}